\tikzstyle{overbrace text style}=[font=\tiny, above, pos=.5, yshift=5pt]
\tikzstyle{overbrace style}=[decorate,decoration={brace,raise=5pt,amplitude=3pt}]
\theoremstyle{definition}
\theoremstyle{plain}
\newtheorem{theorem}{Theorem}[section]
\newtheorem{lemma}[theorem]{Lemma}
\newtheorem{conjecture}[theorem]{Conjecture}
\theoremstyle{definition}
\newcommand{\xbf}{\ensuremath{\mathbf{x}}\xspace}
\newcommand{\ybf}{\ensuremath{\mathbf{y}}\xspace}
\newcommand{\pbf}{\ensuremath{\mathbf{p}}\xspace}
\newcommand{\tbf}{\ensuremath{\mathbf{t}}\xspace}
\newcommand{\argy}[1]{{\color{magenta}[Argy: #1]}}
\title{\bf Heterogeneous Facility Location with Limited Resources}
\author[1]{Argyrios Deligkas}
\author[2]{Aris Filos-Ratsikas}
\author[3]{Alexandros A. Voudouris}
\affil[1]{Royal Holloway University of London, United Kingdom}
\affil[2]{University of Liverpool, United Kingdom}
\affil[3]{University of Essex, United Kingdom}
\date{}
\begin{document}

\maketitle
\thispagestyle{empty}

\begin{abstract}
We initiate the study of the heterogeneous facility location problem with limited resources. We mainly focus on the fundamental case where a set of agents are positioned in the line segment $[0,1]$ and have approval preferences over two available facilities. A mechanism takes as input the positions and the preferences of the agents, and chooses to locate a single facility based on this information. We study mechanisms that aim to maximize the social welfare (the total utility the agents derive from facilities they approve), under the constraint of incentivizing the agents to truthfully report their positions and preferences. We consider three different settings depending on the level of agent-related information that is public or private. For each setting, we design deterministic and randomized strategyproof mechanisms that achieve a good approximation of the optimal social welfare, and complement these with nearly-tight impossibility results.
\end{abstract}

\setcounter{page}{1}

\section{Introduction} \label{sec:intro}
The truthful facility location problem is one of the most prominent paradigms in environments with strategic participants, and it was in fact the prototypical problem used by \cite{PT09} to put forward their very successful research agenda of \emph{approximate mechanism design without money} about a decade ago. Since then, the problem has been extensively studied in the literature of theoretical computer science and artificial intelligence, with a plethora of interesting variants emerging over the years. Among those, one particularly meaningful variant, which captures several important scenarios, is that of \emph{heterogeneous facility location}, introduced by \cite{FJ15} and studied notably by \cite{SV15,SV16}, \cite{anastasiadis2018heterogeneous}, \cite{fong2018facility}, \cite{chen2020facility} and \cite{li2020strategyproof}. In this setting, there are multiple facilities, and each of them plays a different role -- for example, a library and a basketball court. Consequently, the preferences of the agents for the possible outcomes do not only depend on the \emph{location} of the facility (as in the original model of \cite{PT09}), but also on the \emph{type} of the facility. As a result, the mechanism design problem now becomes far more challenging.\footnote{In particular, the preference domain is no longer \emph{single-peaked}, and therefore maximizing the happiness of the agents cannot be achieved by simple median mechanisms.}

While the literature on heterogeneous facility location is quite rich by this point, there is a fundamental setting that has surprisingly eluded previous investigations. In particular, all previous works have considered the case of multiple (predominantly two) facilities which \emph{all} have to be located, based on the positions and the preferences of the agents. However, in many real-world applications, resources are {\em limited}, and therefore a decision has to be made about which {\em subset} of the facilities should be build and where. For instance, the governing body might have sufficient funds to build only one of two options, either a library or a basketball court. The decision must be made based on the preferences of the agents over the two facilities, but also on their positions, in a way that incentivizes the agents to reveal all their private information truthfully; this is clearly a challenging mechanism design problem.

\subsection{Our setting}

We initiate the study of the heterogeneous facility location problem with \emph{limited resources}. We focus on the most fundamental case where there are two facilities, and only one of them must be located somewhere in the line segment $[0,1]$. In particular, there is a set of agents, each of whom is associated with a {\em position} in $[0,1]$ and an {\em approval preference} over the facilities. An agent may approve one of the two facilities or both, and obtains positive {\em utility}\footnote{We remark that in several facility location settings (e.g., see \citep{PT09,Lu09,Lu10}), the agents are associated with costs instead of utilities. In the literature of heterogeneous facility problems however, the setting is commonly defined in terms of utilities, as there is no meaningful way of assigning a cost to undesirable outcomes, such as a facility which the agent does not approve.} only if a facility that she approves is built; otherwise, she has zero utility irrespectively of her position. 

Our goal is to design {\em strategyproof mechanisms} that choose and locate a single facility, so as to maximize the {\em social welfare} (the total utility of the agents) and incentivize the agents to truthfully report their private information. We study the following three settings depending on the level of information about the positions and the preferences of the agents that is assumed to be public or private.
\begin{itemize}
\item {\em General} setting: Both the positions and the preferences are private information of the agents.

\item {\em Known-preferences} setting: The positions are private information of the agents, whereas the preferences are public information.

\item {\em Known-positions} setting: The preferences are private information of the agents, whereas the positions are public information.
\end{itemize}
We measure the performance of a strategyproof mechanism by its \emph{approximation ratio}, defined as the worst-case ratio over all instances of the problem between the maximum possible social welfare and the social welfare achieved by the mechanism. For each of the aforementioned settings, we derive upper and lower bounds on the achievable approximation ratio of strategyproof mechanisms. An overview of our results can be found in Table~\ref{table:results}.

\begin{table}[t]
 \begin{center}

 \begin{tabular}{ l || c | c }
 & Deterministic & Randomized \\
 \hline
 General & 2 & $(1, 2]$\\
 \hline
 Known-preferences & 2 & $[4/3^{\star},4/3]$\\
 \hline
 Known-positions & $[13/11, 2]$ & $(1, 3/2]$
 \end{tabular}
 \caption{Overview of our results for deterministic and randomized strategyproof mechanisms. The lower bound $4/3$ (marked with $\star$) in the known-preferences setting holds only for the class of Random-Median mechanisms defined in Section~\ref{sec:preferences}. For the general and known-preference settings, the bound of $2$ also holds for the more general case where we can choose $k$ out of $m\geq 2$ facilities, for appropriate values of $k$ and $m$.}
\label{table:results}
\end{center}
\end{table}

\subsection{Discussion of our results}
We start our investigation by studying deterministic mechanisms in the general setting, where we show that a simple group-strategyproof mechanism, which we call \emph{Middle mechanism}, achieves an approximation ratio of $2$ (Theorem~\ref{thm:Middle}); the same guarantee extends to the other two settings we consider. We complement this result by showing a lower bound of $2$ on the approximation ratio of \emph{any} deterministic strategyproof mechanism, even when the preferences of the agents are assumed to be known (Theorem~\ref{thm:known-deterministic-lower}). Combining these two results, we completely resolve the problem of identifying the best possible deterministic strategyproof mechanism for both the general and the known-preferences settings. For the known-positions setting, we show that there is no deterministic strategyproof mechanism with approximation ratio better than $13/11$ (Theorem~\ref{thm:13over11-deter}).  

We also consider randomized mechanisms, and provide improved approximation guarantees for both the known-preferences and the known-positions settings. More specifically, for the known-preferences setting we derive a novel universally group-strategyproof mechanism, termed \emph{Mirror mechanism}, which achieves an approximation ratio of $4/3$ (Theorem~\ref{thm:mirror}). This mechanism is in fact a member of a larger class of universally group-strategyproof mechanisms, and as we prove, it is the best possible mechanism in this class (Theorem~\ref{thm:random-median}). 
For the known-positions setting, we prove that the well-known {\em Random Dictatorship} mechanism, equipped with a carefully chosen tie-breaking rule for the agents that approve both facilities, is a universally group-strategyproof mechanism (Theorem~\ref{thm:RD-sp}) and achieves an approximation ratio of $3/2$ (Theorem~\ref{thm:RD}). 

Finally, we make initial progress in more general settings with $m\geq 2$ facilities, from which we can choose to locate $k < m$. We consider three variations based on whether the utility of each agent is determined by all the facilities she approves, or by the one that is the closest to or the farthest away from her position. For such utility classes, we show that an adaptation of the Middle mechanism still has an approximation ratio of $2$ in the general setting, it is group-strategyproof for $k=1$, but it is only strategyproof for $k \geq 2$ (Theorem~\ref{thm:Middle-km} and Lemma~\ref{lem:km-Middle-not-gsp-kgeq2}). We complement this result by showing that when $k \leq 2m$ it is impossible to do better, even when the preferences of the agents are known  (Theorem~\ref{thm:km-deterministic-LB-2}). 

\subsection{Related Work}

As we mentioned earlier, the literature on truthful facility location is long and extensive; here, we discuss only those works that are most closely related to our setting. The fundamental difference between our work and virtually all of the papers on heterogeneous facility location is that they consider settings with two facilities, where \emph{both} facilities have to be built, and the utility/cost of an agent is calculated with respect to the closest or the farthest among the two.  

In particular, \cite{chen2020facility} consider a setting in which agents have approval preferences over the facilities, similarly to what we do here, and for which the positions of the agents are known. \cite{li2020strategyproof} consider a more general metric setting along the lines of \cite{chen2020facility}, and design a deterministic mechanism which improves upon the result of \cite{chen2020facility} when the metric is a line. \cite{fong2018facility} consider a setting in which the agents have fractional preferences in $(0,1)$; similarly to us, besides studying the general setting, they also consider restricted settings with known preferences or known positions. \cite{SV15,SV16} consider a discrete setting, where the agents are positioned on the nodes of a graph, and the facilities must be located on different nodes. \citet{FJ15} were the first to study heterogeneous facility location, by presenting a ``hybrid'' model combining the standard facility location problem with the obnoxious facility location problem \citep{CWZ11,CWZ13}. This setting was extended by \cite{anastasiadis2018heterogeneous}, who allowed agents to be indifferent between whether a facility would be built or not. \cite{DLLX19} study a setting where the goal is to locate two facilities under the constraint that the distance between the locations of the facilities is at least larger than a predefined bound. 

\cite{LWZ20} study a conceptually similar but fundamentally different facility location problem under budget constraints. In their setting, the facilities are \emph{strategic} and need to be compensated monetarily in order for them to be built; the goal is to maximize an aggregate objective given that the total payment is below a predefined budget. \cite{LWZ20} study a conceptually similar but fundamentally different facility location problem under budget constraints. In their setting, the facilities are \emph{strategic} and need to be be compensated monetarily in order for them to be built; the goal is to maximize an aggregate objective given that the total payment is below a predefined budget. Besides these works, there is long literature of (homogeneous) facility location, studying 
different objectives~\citep{AFPT10,CFT16,FSY,FW}, 
multiple facilities~\citep{escoffier2011strategy,fotakis2013winner,Lu09,Lu10}, 
different domains~\citep{schummer2002strategy,tang2020characterization,sui2013analysis,sui2015approximately}, different cost functions \citep{FLZZ, fotakis2013strategyproof}, and several interesting variants~\citep{golomb2017truthful,kyropoulou2019mechanism,zhang2014strategyproof,filos2020approximate}.

\section{Preliminaries} \label{sec:prelims}
We consider a facility location setting with a set $N$ of $n$ {\em agents} and two {\em facilities}; we will discuss extensions to settings with more than two facilities in Section~\ref{sec:extensions}. Every agent $i \in N$ has a {\em position} $x_i \in [0,1]$; let $\xbf=(x_i)_{i \in N}$ be the {\em position profile} consisting of the positions of all agents. Furthermore, every agent $i \in N$ also has an {\em approval preference} (or, simply, {\em preference}) $\tbf_i = \{0,1\}^2$ over the two facilities, where $t_{ij}=1$ denotes that the agent {\em approves} facility $j \in \{1, 2\}$ and $t_{ij}=0$ denotes that she {\em does not approve} facility $j$; let $\tbf = (\tbf_i)_{i \in N}$ be the {\em preference profile} consisting of the preferences of all agents. Let $I = (\xbf, \tbf)$ denote an instance of this setting. 

Given an instance $I = (\xbf, \tbf)$, our goal is to {\em choose} and {\em locate} a single facility so as to optimize some objective function that depends on both the distances of the agents from the facility location and on whether they approve the chosen facility.
In particular, if facility $j \in \{1, 2\}$ is chosen to be located at $y \in [0,1]$, the {\em utility} of every agent $i \in N$ is defined to be $u_i(j, y | I) =  t_{ij} \cdot \big(1-d(x_i,y)\big)$, where $d(x_i,y)=|x_i-y|$ is the distance between $x_i$ and $y$. Then, the {\em social welfare} is the sum of the utilities of all agents: 
$$W(j, y | I) := \sum_{i \in N} u_i(j, y | I).$$
We denote the {\em optimal} social welfare for instance $I$ as $W^*(I) := \max_{(j,y)} W(j,y|I)$.

A {\em mechanism} $M$ takes as input an instance $I=(\xbf, \tbf)$ consisting of the position and preference profiles of the agents, and outputs an {\em outcome} $M(I) = (j_M, y_M)$ consisting of a facility $j_M \in \{1, 2\}$ that is to be placed at $y_M \in [0,1]$.  
The {\em approximation ratio} $\rho(M)$ of $M$ is defined as the worst-case ratio (over all possible instances) between the optimal social welfare and the social welfare of the outcome chosen by the mechanism, that is, 
$$\rho(M) = \sup_I \frac{W^*(I)}{W(M(I)|I)}.$$
A mechanism is {\em strategyproof} if it is in the best interest of every agent to report their true position and preferences, irrespectively of the reports of the other agents. Formally, a mechanism $M$ is strategyproof if, for every pair of instances $I=(\xbf, \tbf)$ and $I'=((x_i',\xbf_{-i}), (\tbf_i',\tbf_{-i}))$ in which only a single agent $i$ misreports a different position and preferences, it holds that 
$$u_i(M(I)|I) \geq u_i(M(I')|I).$$

Besides mechanisms that deterministically select a facility and its location, we will also study {\em randomized} mechanisms, which choose the outcome according to probability distributions. In particular, a randomized mechanism locates each facility $j \in \{1, 2\}$ at $y \in [0,1]$ with some probability $p_j(y)$ such that $\sum_{j \in \{1, 2\}} \int_0^1 p_j(y)dy = 1$. Denoting by $\pbf = (p_1, p_2)$ the probability distribution (for both facilities) used by the mechanism, the {\em expected utility} of every agent $i \in N$ is computed as 
$$u_i(\pbf|I) = \sum_{j \in \{1, 2\}} t_{ij} \cdot \int_0^1 (1-|x_i-y|)  \cdot p_j(y)dy.$$
A randomized mechanism is {\em strategyproof in expectation} if no agent can increase her {\em expected utility} by misreporting. Also, we say that a randomized mechanism is {\em universally strategyproof} if it is a probability distribution over deterministic strategyproof mechanisms. Clearly, a universally strategyproof mechanism is strategyproof in expectation, but the converse is {\em not} necessarily true.  

We will also discuss about mechanisms that are resilient to misreports by coalitions of agents. 
In particular, a mechanism is {\em group-strategyproof} if no coalition of agents can simultaneously misreport such that the utility of {\em every} agent in the coalition strictly increases. 

We are interested in mechanisms that satisfy strategyproofness properties (like the ones discussed above) and at the same time achieve an as low as possible approximation ratio (that is, an approximation ratio as close as possible to $1$). In our technical analysis in the upcoming sections, we will distinguish between the following settings:
\begin{itemize}
\item In the {\em general setting}, the agents can misreport both their positions and preferences.
\item In the {\em known-preferences setting}, the preferences of the agents are assumed to be known and the agents can misreport only their positions.
\item In the {\em known-positions setting}, the positions of the agents are assumed to be known and the agents can misreport only their preferences.
\end{itemize}
Observe that positive results (i.e., (group-)strategyproof mechanisms with proven approximation guarantees) for the general setting are also positive results for the known-preferences and known-positions settings. Moreover, negative results (i.e., lower bounds on the approximation of (group-)strategyproof mechanisms) for the restricted settings are also negative results for the general setting. Finally, results (positive or negative) for one of the two restricted settings do not imply anything for the other restricted setting. 


\section{General setting}\label{sec:general}
We start the presentation of our technical results by focusing on the general setting; recall that in this setting the agents can misreport both their positions and their preferences. Due to the structure of the problem, which combines voting (based on the preferences of the agents) and facility location (based on the positions of the agents), it is natural to wonder whether simple adaptations of the median mechanism (which is known to be strategyproof and optimal for the original single-facility location problem) lead to good solutions. For example, we could define mechanisms that locate the majority-winner facility (breaking ties in a consistent way) at the median among the agents that approve it, or at the overall median agent. Unfortunately, it is not hard to observe that the first mechanism is {\em not} strategyproof, while the second one has an approximation ratio that is linear in the number of agents. 

Luckily, there is an even simpler deterministic mechanism that is group-strategyproof and achieves an approximation of at most $2$ in the general setting. In the next section, we will further show that this mechanism is best possible among all deterministic strategyproof mechanisms in terms of approximation, even when the preferences of the agents are known.

\medskip

\begin{tcolorbox}[title=Middle mechanism (MM)]
\begin{enumerate}
\item 
Count the number $n_j$ of agents that approve each facility $j \in \{1,2\}$.
\item 
Locate the most preferred facility at location $\frac{1}{2}$, breaking ties arbitrarily.
\end{enumerate}
\end{tcolorbox}

\begin{theorem}
\label{thm:Middle}
The Middle mechanism is group-strategyproof and has an approximation ratio of at most $2$. 
\end{theorem}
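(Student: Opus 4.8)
The plan is to establish the two assertions separately by elementary arguments. For group-strategyproofness, the crucial structural feature is that the Middle mechanism always places the selected facility at $1/2$, regardless of the reported positions; hence the only thing any agent or coalition can affect is \emph{which} facility is built, and this is determined solely by the reported approval counts $n_1,n_2$. Moreover, since every position $x_i$ lies in $[0,1]$, the factor $1-|x_i-1/2|$ is always at least $1/2$, hence strictly positive. Consequently an agent strictly benefits from a (joint) deviation only if the built facility changes from one she does not approve to one she does; in particular, an agent who approves both facilities can never strictly gain.

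Using this, I would argue as follows. Suppose, towards a contradiction, that a coalition $C$ deviates so that every member strictly increases her utility. Let $a$ be the facility built under the truthful profile and $b$ the facility built after the deviation, with $a\neq b$. By the observation above, every $i\in C$ must satisfy $t_{ia}=0$ and $t_{ib}=1$, so truthfully each $i\in C$ adds exactly $1$ to the difference $n_b-n_a$, while after \emph{any} report the same agent adds at most $1$ to it; agents outside $C$ do not change their reports. Hence the deviation cannot increase $n_b-n_a$. But if the tie-breaking rule is fixed in advance, the statements ``$a$ is built truthfully'' and ``$b$ is built after the deviation'' translate into incompatible inequalities for $n_b-n_a$ (one of them says $n_b-n_a\le 0$ or $n_b-n_a\le -1$, the other that $n_b-n_a\ge 0$ or $n_b-n_a\ge 1$, depending on which facility wins ties), a contradiction. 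This in particular subsumes strategyproofness for single agents.

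For the approximation ratio, let $j^{*}$ be the facility returned by the mechanism, so $n_{j^{*}}\ge n_{j}$ for the other facility $j$. Since $j^{*}$ is located at $1/2$ and every agent is within distance $1/2$ of $1/2$, the welfare achieved is $W(\mathrm{MM}(I)\,|\,I)=\sum_{i:\,t_{ij^{*}}=1}\bigl(1-|x_i-1/2|\bigr)\ge n_{j^{*}}/2$. Conversely, for any facility $j$ and any location $y$ we have $W(j,y\,|\,I)=\sum_{i:\,t_{ij}=1}\bigl(1-|x_i-y|\bigr)\le n_{j}\le n_{j^{*}}$, so $W^{*}(I)\le n_{j^{*}}$. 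Dividing the two bounds yields $W^{*}(I)/W(\mathrm{MM}(I)\,|\,I)\le 2$.

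I expect no genuinely hard step here; the only points requiring care are in the group-strategyproofness argument, namely that the tie-breaking must be treated as a fixed deterministic rule (not chosen adversarially per instance) and that the boundary case $n_a=n_b$ must be tracked correctly when comparing the outcome before and after the deviation. The approximation bound is then immediate from the two counting inequalities above.
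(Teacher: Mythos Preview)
Your proposal is correct and follows essentially the same approach as the paper: both observe that the location $1/2$ is report-independent (so only the facility choice matters), identify that any strictly-benefiting coalition member must disapprove the currently chosen facility and approve the other, and then use a counting argument on $n_1,n_2$ to rule out a flip; the approximation-ratio argument is identical. If anything, your treatment of group-strategyproofness is slightly more careful than the paper's—tracking the margin $n_b-n_a$ explicitly and handling the fixed tie-breaking case analysis—whereas the paper argues more informally that agents approving the chosen facility would not join a coalition and that the remaining agents cannot decrease its count.
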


\begin{proof}
Consider any instance $I=(\xbf, \tbf)$. 
To show that the mechanism is group-strategyproof, first observe that the positions of the agents are not taken into account when deciding which facility to locate and where. Hence, no agent has a reason to misreport her position. It remains to argue that there exists no group of agents who can all strictly increase their utility by misreporting their preferences. To this end, assume that facility $j\in \{1,2\}$ is chosen to be placed at $1/2$. Observe that the utility of any agent that approves $j$ is maximized subject to the constraint that the chosen facility is always placed at $1/2$. Hence, such agents would not have incentive to participate in a misreporting coalition. Moreover, the count $n_j$ of facility $j$ would only increase if any group of agents that truly disapprove facility $j$, misreport that they approve it. Hence, the outcome would not change in such a case, this proving that is indeed group-strategyproof.

We now focus on the approximation ratio of the mechanism. Let $j$ be the facility chosen by the mechanism, and let $o$ be the optimal facility. We make the following simple observations:
\begin{itemize}
\item 
Since the facility is placed at $1/2$, every agent $i$ that approves $j$ has utility at least $1/2$.

\item 
By the definition of the mechanism, we have that $n_j \geq n_o$. 

\item 
Since the maximum utility of any agent is $1$, we have that $W^*(I) \leq n_o$.
\end{itemize} 
Putting all of these together, we have:
\begin{align*}
W(\text{MM}(I)|I) \geq \frac{1}{2} n_j \geq \frac{1}{2} n_o \geq \frac{1}{2} W^*(I),
\end{align*}
and the bound on the approximation ratio follows.
\end{proof}

\section{Known-preferences setting} \label{sec:preferences}
Here, we focus on the known-preferences setting, where we assume that the agents can only strategize over their positions. Our first result is a lower bound of $2$ on the approximation ratio of any strategyproof deterministic mechanism, thus proving that the Middle mechanism presented in the previous section is best possible for the general and the known-preferences settings. 

\begin{figure}[t]
\centering
\includegraphics[scale=0.74]{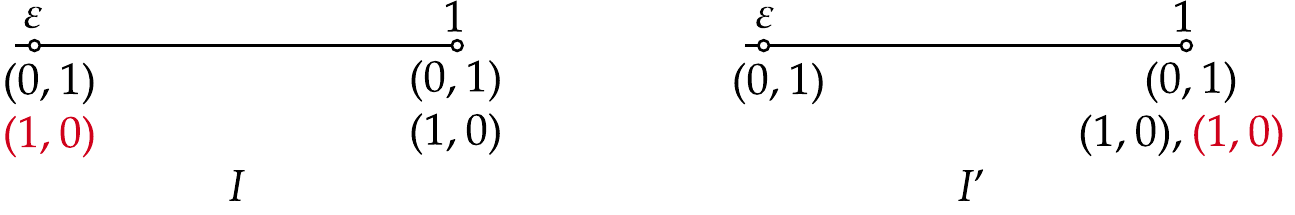}
\caption{The two instances used in the proof of Theorem~\ref{thm:known-deterministic-lower}, which differ only on the position of an agent with preferences $(1,0)$ marked in red.}
\label{fig:lb2}
\end{figure}

\begin{theorem}
\label{thm:known-deterministic-lower}
In the known-preferences setting, there is no deterministic strategyproof mechanism with approximation ratio better than $2-\delta$, for any $\delta > 0$.
\end{theorem}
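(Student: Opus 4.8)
The plan is to construct two instances that differ only in the reported position of a single agent, force any strategyproof mechanism to give the same answer (or an answer that is provably bad) on both, and then show that whatever it does, it is off by a factor approaching $2$ on at least one of them. Since this is the known-preferences setting, the preference profile is fixed across both instances; only positions may change, and the agent whose position differs is the one drawn in red in Figure~\ref{fig:lb2} with preference $(1,0)$.

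\medskip

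\noindent\textbf{Setup of the two instances.} First I would take a small number of agents (a constant, independent of $n$) with a carefully chosen preference profile so that the "vote counts" $n_1$ and $n_2$ are balanced or nearly balanced — this is what makes the choice of which facility to build genuinely contested and forces the mechanism's hand. Concretely, place one block of agents approving only facility $1$ near one end of $[0,1]$ and a symmetric block approving only facility $2$ near the other end, plus the single mobile agent with preference $(1,0)$. In instance $I$, the mobile agent sits near the facility-$2$ block's end, say at position close to $1$; in instance $I'$, the mobile agent sits at the facility-$1$ block's location near $0$. Because only positions changed, strategyproofness applies directly to this agent between $I$ and $I'$.

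\medskip

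\noindent\textbf{The case analysis.} The mechanism, on instance $I'$, must choose a facility $j_M$ and a location $y_M$. I would argue: if it builds facility $2$ (or builds facility $1$ but far from the clustered facility-$1$ supporters), then the social welfare on $I'$ is a factor arbitrarily close to $2$ below optimal, because the optimal solution builds facility $1$ on top of the large co-located facility-$1$ block and collects utility close to the number of those agents, whereas the mechanism collects at most roughly half of that. If instead the mechanism builds facility $1$ near $0$ on $I'$, then the mobile $(1,0)$ agent gets utility close to $1$ there. Now move to instance $I$: by the same counting/symmetry the mechanism faces the mirror-image pressure and, to be within $2-\delta$ on $I$, it would have to build a facility near position $1$ — but building anything near $1$ gives the mobile agent (whose true position is near $1$ in $I$, but who could instead report the position she has in $I'$) strictly less utility than she would get by reporting her $I'$ position and triggering the near-$0$ facility-$1$ outcome, or vice versa. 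Balancing the parameters so both the approximation demands and the strategyproofness inequality cannot be simultaneously satisfied yields the bound $2-\delta$ for every $\delta>0$.

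\medskip

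\noindent\textbf{Main obstacle.} The delicate point is choosing the preference profile and the two positions so that the vote count does not let the mechanism "escape" by switching facilities cheaply: I need the two facilities to be close enough in support that whichever one is built, roughly half the potential welfare is on the table, while simultaneously the mobile agent's manipulation genuinely flips the outcome between a "near $0$" and a "near $1$" placement. Getting the constants to push the ratio all the way to $2$ (rather than some weaker constant like $3/2$) is where the construction must be tight — likely achieved by making the two opposing blocks large and equal and letting the distances of the off-cluster agents tend to $0$, so the $1/2$ loss from mis-placement becomes the dominant term and matches the Middle mechanism's upper bound exactly. I would close by noting this lower bound holds already with known preferences and hence a fortiori in the general setting, matching Theorem~\ref{thm:Middle}.
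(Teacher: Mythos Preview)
Your high-level plan (two instances differing only in one agent's reported position, use strategyproofness to force consistency, show one instance is bad) matches the paper, but the concrete construction and the case analysis do not go through.

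The first problem is the manipulation lever. You try to argue via \emph{where} the chosen facility is placed (near $0$ versus near $1$), hoping the mobile $(1,0)$ agent will want to pull the location toward herself. But in your own case analysis this gives no contradiction: if in $I$ the facility she approves is built near her true position $1$, she already has utility close to $1$ and gains nothing by misreporting to position $0$ (her true utility would then drop to nearly $0$). The ``or vice versa'' at the end of your paragraph is hiding that neither direction actually yields a profitable deviation.

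The second problem is quantitative. With ``large and equal'' blocks of size $k$ at the two ends plus one mobile agent, your instance $I'$ has optimal welfare $k+1$ (facility $1$ at $0$) while facility $2$ still gives welfare $k$, so the ratio you can force is only $(k+1)/k\to 1$. With blocks of size $1$ you do get a ratio of $2$ on $I'$, but then in your instance $I$ \emph{every} outcome is optimal (both facilities achieve welfare $1$), so there is no approximation pressure on the mechanism in $I$ at all, and hence nothing to contradict.

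The paper's construction avoids both issues by exploiting a different lever: the \emph{identity} of the chosen facility rather than its location. It takes a perfectly symmetric instance $I$ (two $(1,0)$ agents and two $(0,1)$ agents, one of each at $\varepsilon$ and one of each at $1$), so that by symmetry one may assume the mechanism builds facility $2$. The mobile agent is then a $(1,0)$ agent whose utility is exactly $0$ because her approved facility was not chosen. This makes the strategyproofness step trivial: if moving her to $1$ (yielding $I'$) caused the mechanism to switch to facility $1$, she would strictly gain from $0$ to something positive. Hence the mechanism is stuck with facility $2$ on $I'$, where facility $2$ can earn at most $1+\varepsilon$ while facility $1$ at position $1$ earns $2$. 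The whole argument turns on the mobile agent having utility $0$ under the status-quo facility, which your setup never arranges.
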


\begin{proof}
Consider an arbitrary deterministic strategyproof mechanism and the following instance $I$ depicted in Figure~\ref{fig:lb2}. There are four agents, two with preferences $(0,1)$ and two with preferences $(1,0)$. One agent of each type is positioned at some $\varepsilon \in (0, 1/2)$ and the other is positioned at $1$. Without loss of generality, we can assume that the mechanism chooses to locate facility $2$; the welfare of the agents in this instance is maximized no matter where the facility is actually located.

Now consider a second instance $I'$ that is obtained from $I$ when only the agent $i$ with preference $(1,0)$ that is positioned at $\varepsilon$ is moved to $1$. Since the mechanism is strategyproof, it must choose to locate facility $2$ in instance $I'$ as well; otherwise, agent $i$ would prefer to misreport her position in instance $I$ as $1$, thus leading to instance $I'$ and the selection of facility $1$, which would increase her utility from $0$ to positive. However, the welfare from locating facility $2$ in instance $I'$ is at most $1+\varepsilon$ (no matter where it is located), whereas the optimal welfare is equal to $2$, achieved when facility $1$ is located at $1$. The bound on the approximation ratio follows by selecting $\varepsilon$ to be arbitrarily small.
\end{proof}

Next, we turn our attention to randomization and consider the class of {\em Random-Median} mechanisms. Every mechanism in this class operates by first randomly choosing one of the facilities based on the preferences the agents, which is then located at the median among the agents that approve it. So, different choices of the probability distribution according to which the facility is chosen lead to different Random-Median mechanisms. It is not hard to observe that all such mechanisms are universally group-strategyproof. 

\begin{lemma} \label{lem:Random-Median-SP}
Every Random-Median mechanism $M$ is universally group-strategyproof.
\end{lemma}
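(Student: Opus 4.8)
The plan is to show that each Random-Median mechanism is a probability distribution over deterministic mechanisms, each of which is group-strategyproof, which is exactly the definition of being universally group-strategyproof. So the argument splits into two parts: identifying the deterministic components, and proving group-strategyproofness of each component.

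First I would fix a Random-Median mechanism $M$ and observe that its randomness is confined to the initial choice of facility: once a facility $j$ is selected, the location is deterministically the median of the positions reported by the agents approving $j$. Moreover, since preferences are public in this setting (we are in the known-preferences setting), the probability with which each facility is chosen is a fixed constant determined by $\tbf$, not something an agent can influence. Hence $M$ is the distribution that, with the appropriate probability, runs the deterministic mechanism $M_j$ = ``locate facility $j$ at the median among the agents approving $j$.'' It therefore suffices to show each $M_j$ is group-strategyproof against position misreports (preferences being fixed).

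The core step is the group-strategyproofness of $M_j$. Consider a coalition $C$ of agents who jointly misreport their positions. Agents in $C$ who do not approve $j$ have utility $0$ regardless, so they cannot strictly gain; hence we may assume every member of $C$ approves $j$. For those agents, utility is $1 - |x_i - y|$ where $y$ is the reported median among all agents approving $j$. This reduces the problem to the classical fact that the median mechanism for single-peaked preferences on the line is group-strategyproof: if a coalition moves the median, then the median either stays put or moves toward some subset of the coalition and away from its complement within the coalition, so at least one coalition member weakly worsens — in fact, the standard argument shows no coalition member can strictly improve. I would spell this out: let $y$ be the true median (among approvers of $j$) and $y'$ the median after the misreport; without loss of generality $y' \geq y$; then any coalition member with true position $x_i \leq y$ has $|x_i - y'| \geq |x_i - y|$, so does not gain; and for the coalition to actually shift the median up to $y'$, at least one such agent (with true position on the low side of, or at, $y$) must be among those who reported a higher position, and that agent strictly loses or stays the same — in either case not a strict gain, contradicting the requirement that \emph{every} coalition member strictly improves.

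The main obstacle, and the place to be careful, is the bookkeeping around ties and the ``weak versus strict'' distinction: the median of an even number of points is not unique, so I need to fix the tie-breaking convention used by $M_j$ (e.g., the lower median, or whatever the mechanism specifies) and check the monotonicity argument still goes through under that convention, and I need to handle the degenerate cases where the set of agents approving $j$ is empty or is entirely contained in the coalition. I expect these to be routine once the right case split is in place — the essential content is just the classical group-strategyproofness of the median, restricted to the subpopulation of approvers of the chosen facility — so the proof should be short.
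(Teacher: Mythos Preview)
Your proposal is correct and follows essentially the same approach as the paper: the paper's proof simply records the two facts you identify --- that the facility choice depends only on the (public) preferences and hence is not manipulable, and that, conditional on the facility, locating at the median of its approvers is a known group-strategyproof rule --- without spelling out the median argument you sketch. Your additional detail on the median step and the caveats about ties are fine but more than the paper supplies.
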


\begin{proof}
The lemma follows directly by the following two facts: 
(1) The choice of the facility to be located is made only based on the preferences of the agents, which are assumed to be known, and thus cannot be manipulated.
(2) Given the facility, the location is chosen to be the position of the median agent among the ones that approve it, which is known to be a strongly strategyproof mechanism. 
\end{proof}

Probably the simplest Random-Median mechanism one can think of is the Proportional mechanism defined below, which selects every facility with probability proportional to the number of agents that approve it. By exploiting the definition of this probability, we can show that the Proportional mechanism has an approximation of $(1 + \sqrt{3})/2 \approx 1.366$, thus significantly improving upon the bound of $2$ achieved by deterministic mechanisms.

\medskip

\begin{tcolorbox}[title=Proportional mechanism]
\begin{enumerate}
\item
Count the number $n_j$ of agents that approve each facility $j \in \{1,2\}$.
\item 
With probability $\frac{n_j}{n_1+n_2}$ locate facility $j$ at the median among the agents that approve it.
\end{enumerate}
\end{tcolorbox}

\begin{theorem}
\label{thm:Proportional}
In the known-preferences setting, the Proportional mechanism is universally group-strategyproof and has an approximation ratio $(1 + \sqrt{3})/2 \approx 1.366$. 
\end{theorem}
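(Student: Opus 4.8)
The strategyproofness part is already handled by Lemma~\ref{lem:Random-Median-SP}, since the Proportional mechanism is a Random-Median mechanism, so the entire task is to bound the approximation ratio. The plan is to fix an arbitrary instance $I=(\xbf,\tbf)$, write $n_1,n_2$ for the approval counts and assume without loss of generality that facility $1$ is the optimal one, i.e. $W^*(I)\le n_1$ (using the trivial bound that each agent contributes utility at most $1$). I would then lower-bound the expected social welfare of the mechanism by bounding, for each facility $j$, the welfare $W_j$ obtained when $j$ is placed at the median of its approvers. The key classical fact to invoke is that locating a facility at the median of a set of $n_j$ points on $[0,1]$ guarantees total ``coverage'' welfare at least $n_j/2$ among those points — more precisely, at least half the approvers are within distance $1/2$ of the median, or one can use the sharper statement that the sum $\sum_i (1-|x_i-\text{med}|)$ over approvers of $j$ is at least $n_j/2$ (and in fact this is the only lower bound we can guarantee in the worst case, as the points can be split at the two endpoints). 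So $\ebb[W(\text{Proportional}(I))] = \frac{n_1}{n_1+n_2}W_1 + \frac{n_2}{n_1+n_2}W_2 \ge \frac{n_1}{n_1+n_2}\cdot\frac{n_1}{2} + \frac{n_2}{n_1+n_2}\cdot\frac{n_2}{2} = \frac{n_1^2+n_2^2}{2(n_1+n_2)}$.

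Next I would turn the approximation ratio into a single-variable optimization. Setting $r = n_2/n_1 \in [0,\infty)$ (so $r$ is the ratio of the non-optimal to the optimal count; note $r\le 1$ is \emph{not} forced since ``optimal facility'' refers to social welfare, not approval count, but the median welfare bound still goes through), the ratio is bounded by
\begin{align*}
\frac{W^*(I)}{\ebb[W(\text{Proportional}(I))]} \le \frac{n_1}{\frac{n_1^2+n_2^2}{2(n_1+n_2)}} = \frac{2(n_1+n_2)n_1}{n_1^2+n_2^2} = \frac{2(1+r)}{1+r^2}.
\end{align*}
Then I would maximize $f(r)=\frac{2(1+r)}{1+r^2}$ over $r\ge 0$: differentiating gives a quadratic whose relevant root is $r = \sqrt{3}-1$, and substituting back yields $f(\sqrt3-1) = \frac{2\sqrt3}{3-\sqrt3+1}$, which simplifies to $(1+\sqrt3)/2$. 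That is exactly the claimed bound, so this chain gives the upper bound on the approximation ratio.

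The main obstacle — and the place to be careful — is the very first reduction, namely whether $W^*(I)\le n_1$ with $n_1 = \max(n_1,n_2)$, or whether the optimal facility might actually be the one with \emph{fewer} approvers. The social welfare of placing facility $j$ optimally can be as large as $n_j$ but could, for a spread-out population, be much smaller; so in principle the minority facility could be socially optimal only if its approvers are very tightly clustered, but then its median welfare is close to $n_o$ rather than $n_o/2$, which only helps. I would handle this by \emph{not} assuming $n_1\ge n_2$: instead let $o$ be the optimal facility, bound $W^*(I)\le n_o$, bound the mechanism's welfare from \emph{both} branches as above, and observe the resulting ratio $\frac{2(n_1+n_2)n_o}{n_1^2+n_2^2}$ is largest (for fixed $n_o$) when the other count is free, leading to the same $\frac{2(1+r)}{1+r^2}$ with $r$ now the ratio of the non-optimal count to $n_o$, ranging over all of $[0,\infty)$; the maximizer $r=\sqrt3-1<1$ lies in range regardless, so the bound is unaffected. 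A secondary point worth a sentence is confirming the median-welfare lower bound $W_j\ge n_j/2$ is tight enough — it is, since the proof only needs a lower bound, and one should just cite the standard argument that among $n_j$ points the median is within $1/2$ of at least $\lceil n_j/2\rceil$ of them (each such point contributing $\ge 1/2$), or the cleaner integral/pairing argument giving $\sum(1-|x_i-\text{med}|)\ge n_j/2$ directly.
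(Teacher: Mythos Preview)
Your argument has a genuine gap that makes the final bound too weak. You lower-bound the mechanism's expected welfare by replacing \emph{both} $W_1$ and $W_2$ with the median bound $n_j/2$, while upper-bounding $W^*(I)$ by $n_o$. This throws away the key fact that, for a single facility, the median of its approvers is the socially \emph{optimal} location; hence if facility~$1$ is the optimal one then $W^*(I)=W_1$ exactly, and the mechanism's branch that picks facility~$1$ contributes $\tfrac{n_1}{n_1+n_2}\cdot W^*(I)$, not merely $\tfrac{n_1}{n_1+n_2}\cdot \tfrac{n_1}{2}$. By discarding this equality you are left with the bound $\tfrac{2(1+r)}{1+r^2}$, and the calculus you report is incorrect: solving $\tfrac{d}{dr}\big(\tfrac{2(1+r)}{1+r^2}\big)=0$ gives $r^2+2r-1=0$, i.e.\ $r=\sqrt{2}-1$, with maximum value $1+\sqrt{2}\approx 2.414$, not $(1+\sqrt{3})/2$. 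So your chain, as written, only proves an approximation ratio of $1+\sqrt{2}$, which is even worse than the deterministic Middle mechanism.

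The paper's proof avoids exactly this loss. Assuming $W_1\ge W_2$ (so $W^*(I)=W_1$), it writes the ratio as
\[
\frac{W_1}{\frac{n_1}{n_1+n_2}W_1+\frac{n_2}{n_1+n_2}W_2}
=\frac{1}{\frac{n_1}{n_1+n_2}+\frac{n_2}{n_1+n_2}\cdot\frac{W_2}{W_1}},
\]
and then bounds only the ratio $W_2/W_1\ge (n_2/2)/n_1$ using $W_2\ge n_2/2$ and $W_1\le n_1$. This yields $\tfrac{2y^2+2y}{2y^2+1}$ with $y=n_1/n_2$, whose maximum (for $y\ge 1/2$, a constraint that follows from $n_1\ge W_1\ge W_2\ge n_2/2$) is attained at $y=(1+\sqrt{3})/2$ and equals $(1+\sqrt{3})/2$. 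The crucial difference is keeping $W_1$ in numerator and denominator so it partially cancels, rather than separately relaxing it in each.
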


\begin{proof}
Since the mechanism is Random-Median, it is universally group-strategyproof due to Lemma~\ref{lem:Random-Median-SP}. To bound the approximation ratio, let $W_j$ be the welfare of the agents that approve facility $j$ when it is chosen (and located at the median of those agents). Without loss of generality, assume that $W_1 \geq W_2$. We also have that $W_1 \leq n_1$ since the maximum possible utility of any agent is $1$. Furthermore, we have that $W_2 \geq n_2/2$. To see why this is the case, consider the agents that approve facility 2 in pairs, where one is on the left of the median (among those that approve facility $2$) and the other is on the right of the median, and observe that the total utility of this pair of agents is at least $1$. Since there are $n_2/2$ such pairs, the claim follows. 
The approximation ratio is
\begin{align*}
\rho(\text{Proportional}) 
&= \frac{W_1}{\frac{n_1}{n_1+n_2}W_1 + \frac{n_2}{n_1+n_2}W_2} 
= \frac{1}{\frac{n_1}{n_1+n_2} + \frac{n_2}{n_1+n_2}\cdot \frac{W_2}{W_1}} \\
&\leq \frac{1}{ \frac{n_1}{n_1+n_2} + \frac{n_2}{n_1+n_2} \cdot \frac{n_2/2}{n_1}}
= \frac{2n_1^2 + 2n_1 n_2}{2n_1^2 + n_2^2}.
\end{align*}
Now, let $y=n_1/n_2$ and observe that, since $n_1 \geq W_1 \geq W_2 \geq n_2/2$, it must be that $y \geq 1/2$. By dividing the last expression above by $n_2^2$, we obtain:
\begin{align*}
\rho(\text{Proportional}) \leq \frac{2(n_1/n_2)^2 + 2(n_1/n_2)}{2(n_1/n_2)^2 + 1} = \frac{2y^2+2y}{2y^2+1}.
\end{align*}
Hence, in order to bound the approximation ratio of the mechanism it suffices to maximize the function $\frac{2y^2+2y}{2y^2+1}$ subject to the constraint $y \geq 1/2$. It is not hard to observe that the maximum value is $(1+\sqrt{3})/2 \approx 1.366$ for $y^* = 1/(\sqrt{3}-1)$, thus proving the upper bound on the approximation ratio of the Proportional mechanism.
\end{proof}

We can further improve upon the bound of the Proportional mechanism, by defining the slightly more involved Mirror mechanism defined below, which uses a probability distribution that is a piecewise function of the numbers of agents that approve the two facilities. Following along the lines of the proof of Theorem~\ref{thm:Proportional}, we can show that the Mirror mechanism has an approximation ratio of $4/3$. 

\medskip

\begin{tcolorbox}[title=Mirror mechanism]
\begin{enumerate}
\item
Count the number $n_j$ of agents that approve each facility $j \in \{1,2\}$; wlog assume $n_1 \geq n_2$ (otherwise switch $n_1$ with $n_2$ below).
\item 
Let $\alpha := \frac{3n_1-2n_2}{4n_1-2n_2}$.
\item 
Choose facility $1$ with probability $\alpha$, and facility $2$ with probability $1-\alpha$.
\item Locate the chosen facility at the median among the agents that approve it.
\end{enumerate}
\end{tcolorbox}

\begin{theorem}
\label{thm:mirror}
In the known-preferences setting, the Mirror mechanism is universally group-strategyproof and has an approximation ratio of $4/3$. 
\end{theorem}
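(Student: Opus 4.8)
The plan is to follow the template of the proof of \cref{thm:Proportional}. Universal group-strategyproofness is immediate: the probability $\alpha$ with which facility $1$ is selected depends only on the counts $n_1,n_2$, which are determined by the (known, hence non-manipulable) preference profile, and the chosen facility is placed at the median of its approvers; thus the Mirror mechanism is a Random-Median mechanism and \cref{lem:Random-Median-SP} applies.

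For the approximation ratio, I would first record the structural facts already used for the Proportional mechanism. For $j\in\{1,2\}$ let $W_j$ be the social welfare when facility $j$ is located at the median of the agents approving it. Since the median minimises the sum of distances, this is the welfare-maximising location for facility $j$, so $W^*(I)=\max\{W_1,W_2\}$. Also $W_j\le n_j$ (each utility is at most $1$) and $W_j\ge n_j/2$ (the median does at least as well as the point $1/2$, at which every approver gets utility at least $1/2$; equivalently, the pairing argument from \cref{thm:Proportional}). Assuming $n_1\ge n_2$ without loss of generality, we have $\alpha=\tfrac{3n_1-2n_2}{4n_1-2n_2}\in(0,1)$ and $1-\alpha=\tfrac{n_1}{4n_1-2n_2}$, and the mechanism's expected welfare is $\alpha W_1+(1-\alpha)W_2$, so the task is to bound $\tfrac{\max\{W_1,W_2\}}{\alpha W_1+(1-\alpha)W_2}$.

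The heart of the argument is a two-case split according to whether $W_1\ge W_2$ or $W_2>W_1$; in each case the ratio is monotone in $W_1$ and in $W_2$ separately, so it is enough to evaluate it at the extreme admissible values. If $W_1\ge W_2$, the ratio is maximised at $W_1=n_1$ and $W_2=n_2/2$ (admissible since $n_1\ge n_2/2$), and the denominator $\alpha n_1+(1-\alpha)n_2/2$ simplifies \emph{exactly} to $\tfrac34 n_1$, yielding the ratio $4/3$ — this is precisely what the value of $\alpha$ is engineered to achieve. If $W_2>W_1$, then necessarily $n_1<2n_2$ (otherwise $W_2\le n_2\le n_1/2\le W_1$), the ratio is maximised at $W_1=n_1/2$ and $W_2=n_2$, and it simplifies to $\tfrac{4n_2(2n_1-n_2)}{3n_1^2}=\tfrac43\,z(2-z)$ with $z:=n_2/n_1\in(1/2,1]$, which is at most $4/3$ because $1-z(2-z)=(1-z)^2\ge0$. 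Taking the maximum over the two cases gives the bound $4/3$.

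The only real obstacle is the bookkeeping: making sure that in each case the extreme values of $W_1,W_2$ used are simultaneously consistent with $n_1\ge n_2$ and with the case hypothesis, and tracking the (short) algebra that makes the $W_1\ge W_2$ case collapse to exactly $\tfrac34 n_1$ in the denominator. Everything else is routine, and the fact that both cases top out at $4/3$ is the sense in which the weights "mirror" the two directions in which the adversary can push the instance.
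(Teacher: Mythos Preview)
Your proposal is correct and follows essentially the same approach as the paper: invoke \cref{lem:Random-Median-SP} for universal group-strategyproofness, use the bounds $n_j/2\le W_j\le n_j$, split into the cases $W_1\ge W_2$ and $W_2>W_1$, and in each case push $W_1,W_2$ to their extremal values to obtain exactly $4/3$. Your write-up is in fact a touch more careful than the paper's in verifying that the extremal pairs are admissible (the observation that $W_2>W_1$ forces $n_1<2n_2$, and the identity $1-z(2-z)=(1-z)^2$), but the underlying argument is identical.
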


\begin{proof}
Since the mechanism is Random-Median, it is universally group-strategyproof due to Lemma~\ref{lem:Random-Median-SP}.
To bound the approximation ratio, let $W_j$ be the welfare of the agents that approve facility $j$ when it is chosen (and located at the median of those agents). Observe that for any facility $j \in \{1,2\}$ it holds that $W_j \leq n_j$ since the maximum possible utility of any agent is $1$, and $W_j \geq n_j/2$ following the same reasoning as in the proof of Theorem~\ref{thm:Proportional}. We distinguish between the following two cases:
\begin{itemize}
\item
$W_1 \geq W_2$. Then, the approximation ratio is:
\begin{align*}
\rho(\text{Mirror}) 
&= \frac{W_1}{\alpha \cdot W_1 + (1-\alpha) \cdot W_2}
= \frac{1}{ \frac{3n_1-2n_2}{4n_1-2n_2} + \frac{n_1}{4n_1-2n_2} \cdot \frac{W_2}{W_1}} \\
&\leq \frac{1}{ \frac{3n_1-2n_2}{4n_1-2n_2} + \frac{n_1}{4n_1-2n_2} \cdot \frac{n_2/2}{n_1}} 
= \frac43.
\end{align*}

\item 
$W_2 > W_1$. We have:
\begin{align*}
\rho(\text{Mirror})
&= \frac{W_2}{\alpha \cdot W_1 + (1-\alpha) \cdot W_2}
= \frac{1}{ \frac{3n_1-2n_2}{4n_1-2n_2} \cdot \frac{W_1}{W_2} + \frac{n_1}{4n_1-2n_2}} \\
&\leq \frac{1}{ \frac{3n_1-2n_2}{4n_1-2n_2} \cdot \frac{n_1/2}{n_2} + \frac{n_1}{4n_1-2n_2}} 
= \frac{2n_2(4n_1-2n_2)}{ 3n_1^2 }. 
\end{align*}
It is not hard to observe that this last expression is maximized to $4/3$ when $n_1=n_2$.
\end{itemize}
Hence, in any case the approximation ratio of the mechanism is at most $4/3$.
\end{proof}

We conclude this section by showing that the Mirror mechanism is best possible among all Random-Median mechanisms in terms of approximation.

\begin{theorem}
\label{thm:random-median}
In the known-preferences setting, the approximation ratio of any Random-Median mechanism is at least $4/3-\delta$, for any $\delta > 0$.
\end{theorem}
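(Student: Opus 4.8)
The plan is to exploit the defining restriction of a Random-Median mechanism: the facility is chosen on the basis of the (here, known) preference profile alone and is then placed at the median of its approvers, so on a fixed preference profile the \emph{only} freedom the mechanism has is a single number $q\in[0,1]$ — the probability of selecting facility $1$ (facility $2$ getting $1-q$) — and this $q$ cannot depend on the agents' positions. Note that this argument does not use strategyproofness at all, only the structural form of these mechanisms. Two preliminary observations make the bookkeeping clean. First, when facility $j$ is located at the median of its approvers, the resulting welfare $W_j$ equals $\max_y W(j,y\mid I)$, since the median minimizes $\sum_{i:t_{ij}=1}|x_i-y|$ and hence maximizes $W(j,y\mid I)=n_j-\sum_{i:t_{ij}=1}|x_i-y|$; therefore, on every instance with this profile, $W^*(I)=\max(W_1,W_2)$ and the mechanism's expected welfare is exactly $q\,W_1+(1-q)\,W_2$ (so the mechanism leaves nothing ``on the table'' beyond the randomization between facilities). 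Second, by positioning the approvers of a facility appropriately the adversary can realize any value of $W_j$ in $[n_j/2,\,n_j]$, and, since the approver sets will be disjoint in the instances used, the two values can be set independently.

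Fix an even integer $n$ and the preference profile with $n$ agents approving only facility $1$ and $n$ agents approving only facility $2$; let $q$ be the probability the mechanism assigns to facility $1$ on this profile. Build two instances. In $I_A$ all $n$ approvers of facility $1$ sit at $\tfrac12$, so $W_1=n$, while the $n$ approvers of facility $2$ are split $n/2$ at $0$ and $n/2$ at $1$, so $W_2=n/2$ — and this value is independent of how ties for the median are broken, because $\sum_{i:t_{i2}=1}|x_i-y|=n/2$ for every $y\in[0,1]$. Then $W^*(I_A)=n$, the mechanism's expected welfare is $qn+(1-q)\tfrac n2=\tfrac n2(1+q)$, and the ratio on $I_A$ is $\tfrac{2}{1+q}$. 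Let $I_B$ be the mirror image, with the two facilities' approvers swapped: the approvers of facility $2$ all at $\tfrac12$ and those of facility $1$ split between $0$ and $1$; then $W^*(I_B)=n$, the expected welfare is $\tfrac n2(2-q)$, and the ratio on $I_B$ is $\tfrac{2}{2-q}$.

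Since $I_A$ and $I_B$ have the same preference profile, the mechanism must use the same $q$ on both, so its approximation ratio is at least $\max\!\bigl(\tfrac{2}{1+q},\,\tfrac{2}{2-q}\bigr)$. As a function of $q\in[0,1]$ this is the maximum of a decreasing and an increasing function, minimized at $q=\tfrac12$ where both equal $\tfrac43$; hence $\rho(M)\ge\tfrac43>\tfrac43-\delta$ for every $\delta>0$. I do not expect a genuine obstacle: the only points that need care are confirming that placing a facility at the median of its approvers is welfare-optimal for that facility (so that $W^*(I)=\max(W_1,W_2)$ with no better location missed) and that the split configuration yields $W_2=n/2$ under any tie-breaking rule — both routine. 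The real content is the conceptual point that a position-oblivious $q$ must hedge simultaneously against an instance where facility $1$ is optimal and one where facility $2$ is optimal, which is exactly why $\tfrac43$ is also what the Mirror mechanism attains.
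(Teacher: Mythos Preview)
Your proof is correct and in fact cleaner than the paper's. Both arguments fix the symmetric preference profile with equal numbers approving each facility exclusively, and exploit that any Random-Median mechanism must commit to a single probability $q$ for facility~$1$ on that profile; the adversary then chooses positions to make the ``wrong'' facility optimal. The difference lies in how the position-independence of $q$ is established. You invoke it directly from the definition of Random-Median mechanisms (the facility choice depends only on preferences), and then exhibit two instances $I_A,I_B$ on the same profile with $(W_1,W_2)=(n,n/2)$ and $(n/2,n)$ respectively, yielding $\rho\ge\max\bigl(\tfrac{2}{1+q},\tfrac{2}{2-q}\bigr)\ge\tfrac{4}{3}$. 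The paper instead starts from a symmetric $\varepsilon$-instance, assumes without loss of generality $p\le 1/2$, moves one agent to obtain a second instance, and then argues via \emph{strategyproofness} that the new probability $p'$ satisfies $p'\le p$. Since both instances share the same preference profile, the definition already gives $p'=p$ outright, so the strategyproofness detour is unnecessary. Your construction also sidesteps the $\varepsilon\to 0$ limit --- the split configuration yields $W_j=n/2$ exactly regardless of how the median tie is broken --- and hence delivers the sharp bound $\rho\ge 4/3$ rather than $4/3-\delta$.
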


\begin{proof}
Consider an arbitrary Random-Median mechanism and the following instance $I$ (which was also used in the proof of Theorem~\ref{thm:known-deterministic-lower}. There are four agents, two with preferences $(0,1)$ and two with preferences $(1,0)$. One agent of each type is positioned at $\varepsilon \in (0,1/2)$ while the other is positioned at $1-\varepsilon$. Due to symmetry, we can assume that the agents located at $\varepsilon$ are the medians for the two facilities. The mechanism randomly chooses one of the facilities. Without loss of generality, we can assume that it chooses facility $1$ with some probability $p \leq 1/2$ and facility $2$ with probability $1-p$. Hence, facility $1$ is located at $\varepsilon$ with probability $p$. 

Now consider the instance $I'$ that is obtained from $I$ by moving the agent $i$ at $1-\varepsilon$ with preference $(1,0)$ to $\varepsilon$. We claim that the mechanism must choose facility $1$ with probability $p' \leq p \leq 1/2$ in this new instance. Suppose that this is not the case and the mechanism chooses facility $1$ with probability $p' > p$ in $I'$. Since the expected utility of agent $i$ is $p\cdot 2\varepsilon$ in $I$, she would have incentive to misreport her position as $\varepsilon$ so that facility $1$ is chosen with probability $p'$ and her expected utility is increased to $p' \cdot 2\varepsilon$. This contradicts the fact that the mechanism is strategyproof in expectation (since it is universally strong group-strategyproof). 

In instance $I'$, the maximum welfare we can achieve by placing facility $1$ is $2$ (when it is placed at $\varepsilon$), and by placing facility $2$ is $1+\varepsilon$ (no matter where it is located). Hence, since the maximum possible expected social welfare achieved by the mechanism is 
$2p' + (1+\varepsilon)(1-p') \leq \frac{3+\varepsilon}{2}$, the approximation ratio of the mechanism is at least $\frac{4}{3+\varepsilon} \geq 4/3-\delta$, for any $\delta > \frac{4\varepsilon}{9+3\varepsilon}$.
\end{proof}


\section{Known-positions setting}\label{sec:positions}
We now turn our attention to the known-positions setting, in which the positions of the agents are fixed, and thus the agents can misreport only their preferences. Our first result is a lower bound of $13/11$ on the approximation ratio of any deterministic strategyproof mechanism.

\begin{figure}[t]
\centering
\includegraphics[scale=0.74]{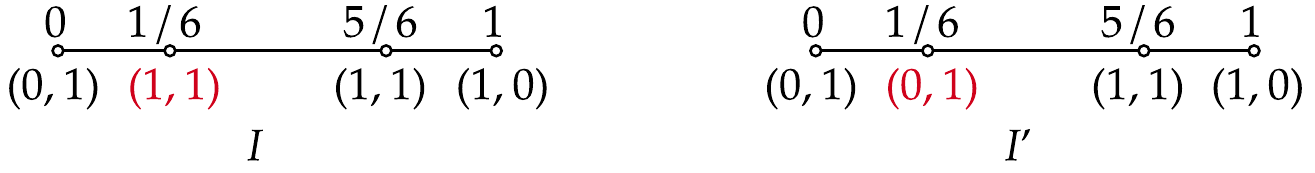}
\caption{The two instances used in the proof of Theorem~\ref{thm:13over11-deter}, which differ only on the preference of the agent positioned at $1/6$ marked in red.}
\label{fig:lb13-11}
\end{figure}

\begin{theorem}
\label{thm:13over11-deter}
In the known-positions setting, there is no deterministic strategyproof mechanism with approximation ratio smaller than $13/11$.
\end{theorem}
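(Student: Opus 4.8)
The plan is to construct two instances that differ only in the reported preference of a single agent, and show that any deterministic strategyproof mechanism must perform badly on at least one of them. The construction will follow the same recipe as the proof of Theorem~\ref{thm:known-deterministic-lower}: place a ``swing'' agent whose approval preference we vary, engineer the positions so that in one instance the optimal choice is facility $1$ and in the other it is facility $2$, and use strategyproofness to pin down the mechanism's behavior across the two instances. The figure caption already tells us the swing agent sits at $1/6$ and the two facilities are symmetric enough that the analysis reduces to a small case split.

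First I would set up instance $I$ with a carefully chosen multiset of positions and preferences so that, say, facility $1$ has a strict welfare advantage but the margin is small — roughly, the ``$1$''-approvers are slightly more numerous or slightly better-positioned than the ``$2$''-approvers, except for the contribution of the swing agent at $1/6$. Then I would form $I'$ by flipping the swing agent's preference (from approving facility $2$ to approving facility $1$, or vice versa), which tips the optimum to the other facility. Next I would argue via strategyproofness: in whichever of the two instances the swing agent is truly of the ``losing'' type, she must not be able to gain by reporting the other preference; combined with the mechanism being deterministic, this forces the mechanism to select the same facility in both instances, OR forces a specific location constraint. That locked-in choice is then suboptimal in one of the two instances, and computing the ratio of optimal welfare to achieved welfare in that instance — taking into account that the chosen facility can be located anywhere in $[0,1]$ to the mechanism's advantage, so I must bound $\max_y W(j,y\mid I)$ over $y$ — yields exactly $13/11$.

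The main obstacle is getting the numerology to land precisely on $13/11$: one must choose the number of agents of each type and their exact positions so that (i) the welfare gap between the two facilities is as small as possible subject to the instances being ``decisive'' (optimum flips), (ii) the best achievable welfare of the wrong facility, maximized over its location, is exactly $11$ times some unit while the optimum is $13$ times that unit, and (iii) the swing agent at $1/6$ genuinely has an incentive issue — i.e., when she is the wrong type, moving her preference would actually flip the mechanism's output absent strategyproofness. The position $1/6$ is the telltale sign: it is presumably chosen so that an agent there contributes $1-1/6 = 5/6$ when the facility is at the far end, or $1 - |1/6 - y|$ optimized appropriately, and these fractions with denominator $6$ are what combine to give $13/11$ after clearing denominators. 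I would reverse-engineer the counts from that target.

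Finally I would double-check the strategyproofness argument handles the possibility that the mechanism selects the ``right'' facility in $I'$ but places it at a bad location, and the symmetric concern in $I$; since the welfare in the decisive instance is maximized regardless of location for the optimal facility (as in the earlier proof), the only real constraint is which facility is chosen, so the argument is clean. Letting the auxiliary $\varepsilon$-type perturbations (if any are needed, as in Figure~\ref{fig:lb13-11}) go to zero, or simply using exact rational positions, gives the bound $13/11$ with no $\delta$ slack — matching the theorem statement, which (unlike Theorem~\ref{thm:known-deterministic-lower}) claims a sharp $13/11$ rather than $13/11 - \delta$.
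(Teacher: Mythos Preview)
Your outline tracks Theorem~\ref{thm:known-deterministic-lower} too closely, and in doing so it misidentifies both the type of the swing agent and the nature of the contradiction. Two concrete gaps:

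\textbf{The swing agent has preference $(1,1)$, not a single approval.} In the paper's instance $I$ the four agents sit at $0,1/6,5/6,1$ with preferences $(0,1),(1,1),(1,1),(1,0)$; in $I'$ the agent at $1/6$ changes from $(1,1)$ to $(0,1)$. Because she approves both facilities in $I$, her utility is positive no matter which facility is built, so her incentive is not about \emph{which} facility is chosen but about \emph{where} it is located. Your proposed flip ``from approving facility $2$ to approving facility $1$, or vice versa'' is the wrong deviation and would not generate the needed incentive.

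\textbf{The contradiction is about location, not facility choice.} Your sentence ``the only real constraint is which facility is chosen, so the argument is clean'' is precisely backwards here. The proof runs by contradiction: assume the ratio is strictly below $13/11$. By symmetry of $I$ assume facility $1$ is chosen at some $y$; a direct calculation shows the welfare is at most $11/6$ whenever $y\le 1/2$ (optimum $13/6$), so the ratio assumption forces $y>1/2$. In $I'$ the maximum welfare from facility $1$ is $11/6$ while facility $2$ at $1/6$ gives $13/6$, so the ratio assumption forces facility $2$ to be chosen at some $z$, and the same calculation forces $z<1/2$. Now the $(1,1)$-agent at $1/6$ in $I$ gets utility $1-|y-1/6|$ with $y>1/2$, but by misreporting $(0,1)$ she induces $I'$ and obtains $1-|z-1/6|$ with $z<1/2$, which is strictly larger. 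That violates strategyproofness.

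So the mechanism is \emph{allowed} (indeed, forced by the ratio assumption) to switch facilities between $I$ and $I'$; the manipulation exploits the resulting location shift. Without the $(1,1)$ swing agent contributing to both facilities' welfare, you will not get the matching $13/6$ versus $11/6$ on each instance that produces the exact $13/11$ bound.
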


\begin{proof}
Suppose towards a contradiction that there exists a deterministic strategyproof mechanism that has an approximation ratio strictly smaller than $13/11$, and consider the following instance $I$ depicted in Figure~\ref{fig:lb13-11}. There is an agent with preferences $(0,1)$ positioned at $0$, an agent with preferences $(1,1)$ positioned at $1/6$, an agent with preferences $(1,1)$ positioned at $5/6$, and an agent with preferences $(1,0)$ positioned at $1$. Due to the symmetry of the instance, without loss of generality, we can assume that the mechanism chooses to place facility $1$ at some location $y \in [0,1]$.

If $y \leq 1/2$, the social welfare achieved by the mechanism is 
$$\bigg(1 - \left|y-\frac16\right|\bigg) + \bigg(1 - \left(\frac56-y\right) \bigg) + \bigg(1-(1-y) \bigg)
= \frac76 +2y - \left|y-\frac16\right| \leq \frac{11}{6}.$$
Since the optimal social welfare is $13/6$ (achieved by placing either facility $1$ at $5/6$ or facility $2$ at $1/6$), the approximation ratio of the mechanism is then $13/11$, contradicting the assumption that it is strictly smaller than $13/11$. Consequently, it must be $y > 1/2$.

Now consider the instance $I'$ that is obtained from $I$ by changing the preference of the agent at $1/6$ to $(0,1)$. In $I'$, the maximum possible social welfare one can hope to achieve by placing facility $1$ is $11/6$ (when it is located anywhere in the interval $[5,6,1]$ is $11/6$) and by placing facility $2$ is $13/6$ (when it is located at $1/6$). Hence, to have an approximation ratio strictly smaller than $13/11$, the mechanism must choose to locate facility $2$ in $I'$ at some position $z \in [0,1]$. Similarly to instance $I$, we can show that it must be $z < 1/2$ as otherwise the approximation ratio of the mechanism would be at least $13/11$. Hence, the agent positioned at $1/6$ with preferences $(1,1)$ in instance $I$ has incentive to misreport her preferences as $(0,1)$ so that the facility is located closer to her, and thus increase her utility. However, this contradicts the assumption that the mechanism is strategyproof, and the theorem thus follows. 
\end{proof}


Unfortunately, we have been unable to design any deterministic mechanism with approximation ratio strictly smaller than the bound of $2$ achieved by the Middle mechanism presented in Section~\ref{sec:general}, and leave it as a challenging open question. Instead, we continue by considering randomized mechanisms. Observe that the instances from Figure~\ref{fig:lb13-11} show that there is no randomised strategyproof mechanism with approximation ratio 1 for the known-positions setting. This is because no matter how the mechanism locates the facilities on instance $I$, it has to locate facility 2 on Instance $I'$, thus the agent positioned at $1/6$ has incentives to misreport her preferences. Next, we analyze the Random Dictatorship (RD) mechanism defined below. We will show that, in the known-positions setting, this mechanism is universally group-strategyproof and has an approximation ratio of $3/2$.

\medskip

\begin{tcolorbox}[title=Random Dictatorship (RD) mechanism]
\begin{enumerate}
\item
Choose an agent $i$ uniformly at random.
\item 
If agent $i$ approves a single facility $j$, then locate $j$ at $x_i$.
\item 
If agent $i$ approves both facilities, then locate the optimal one at $x_i$.
\end{enumerate}
\end{tcolorbox}

\begin{theorem}\label{thm:RD-sp}
In the known-positions setting, RD is universally group-strategyproof.
\end{theorem}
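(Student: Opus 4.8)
The plan is to realise RD as a uniform mixture of $n$ deterministic mechanisms and to prove that each of them is group-strategyproof in the known-positions setting; the definition of ``universally group-strategyproof'' then gives the claim. For every agent $i\in N$ let $M_i$ be the deterministic mechanism that fixes $i$ as the ``dictator'' and runs Steps~2--3 of RD. Since the positions are public and cannot be manipulated, each $M_i$ is a well-defined deterministic mechanism and RD runs $M_i$ with probability $1/n$; hence it suffices to show that every $M_i$ is group-strategyproof. Fix the true instance $I=(\xbf,\tbf)$, a coalition $C\subseteq N$, and a joint preference misreport of the agents in $C$ producing the reported instance $I'$ (the agents outside $C$ are truthful). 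We must show that it is impossible for every agent in $C$ to strictly gain.

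I would first handle two easy cases. If $i\in C$, then in the fully truthful run $M_i(I)$ places a facility that $i$ approves exactly at $x_i$ (Step~2 if $i$ approves one facility, Step~3 if $i$ approves both), so $i$ already gets the maximum utility $1$ and cannot strictly gain; the coalition therefore fails. If $i\notin C$ and $i$ truly approves a single facility $j$, then $M_i$ outputs $(j,x_i)$ independently of the reports of $C$, so the outcome does not change and again no one gains. The only remaining case is $i\notin C$ with $i$ truly approving both facilities, where Step~3 selects ``the optimal facility'' of the reported instance and so the coalition can in principle influence the outcome.

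In this case write $o$ and $o'$ for the facilities selected by $M_i$ on $I$ and on $I'$ respectively (both placed at $x_i$), and suppose for contradiction that every $k\in C$ strictly gains. Agent $k$'s utility changes from $t_{ko}\,(1-|x_k-x_i|)$ to $t_{ko'}\,(1-|x_k-x_i|)$, evaluated at her \emph{true} preference, so a strict gain forces $o'\neq o$, $|x_k-x_i|<1$, and, for every $k\in C$, $t_{ko'}=1$ and $t_{ko}=0$. Consequently, in $I'$ the agents of $C$ only \emph{add} approvals of $o$ (truthfully they add none) and only \emph{remove} approvals of $o'$; hence, pointwise in the location $y$ (using $1-|x_k-y|\ge 0$),
\[
W(o,y\mid I')\ \ge\ W(o,y\mid I)\qquad\text{and}\qquad W(o',y\mid I')\ \le\ W(o',y\mid I).
\]
Since $o$ is optimal on $I$, maximising over $y$ yields
\[
\max_y W(o',y\mid I')\ \le\ \max_y W(o',y\mid I)\ \le\ \max_y W(o,y\mid I)\ \le\ \max_y W(o,y\mid I'),
\]
so $o$ is (weakly) optimal on $I'$ too. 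Assuming Step~3 breaks ties by a fixed priority over the two facilities, this forces $M_i$ to select $o$ on $I'$: if $o$ has higher priority this is immediate, and if $o'$ has higher priority then $o$ being selected on $I$ \emph{despite} its lower priority means $o$ is strictly better than $o'$ on $I$, and the displayed pointwise inequalities carry this strict gap over to $I'$. Either way $o'=o$, a contradiction. Hence every $M_i$ is group-strategyproof, and RD is universally group-strategyproof.

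The only subtle point — and the reason the statement speaks of a ``carefully chosen tie-breaking rule'' — is the tie-handling in Step~3 when the dictator approves both facilities: an instance-dependent or ``optimistic'' tie-break could let the coalition flip the selected facility on a boundary instance where the two facilities are exactly tied, so one commits to a fixed facility priority. The rest is the short case split above together with the pointwise welfare monotonicity that follows from the coalition members' true approvals.
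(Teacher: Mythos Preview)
Your proof is correct and follows the same high-level strategy as the paper: realise RD as the uniform mixture of the $n$ dictator mechanisms $M_i$ and argue that each $M_i$ is group-strategyproof. Where you differ is in the care you take with the case where the dictator $i$ approves both facilities and the coalition $C$ does not contain $i$. The paper dispatches this case with the one-line assertion that ``any coalition of agents that does not include the dictator cannot change the outcome of the mechanism''. Taken literally this is not true: since Step~3 selects the facility that is optimal for the \emph{reported} instance, a non-dictator agent can, by altering her approvals, flip which facility is optimal and hence change the outcome. What is true --- and what you actually prove --- is that no such coalition can change the outcome to their mutual strict benefit: every $k\in C$ who would strictly gain must truly approve $o'$ and not $o$, so the coalition's misreports can only raise the reported welfare of $o$ and lower that of $o'$, preserving the optimality of $o$ (with your fixed-priority tie-breaking handling the boundary case). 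Your argument therefore follows the paper's route but supplies the missing justification for this step; your explicit remark about the tie-breaking rule is also a point the paper leaves implicit.
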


\begin{proof}
Clearly, RD is a uniform probability distribution over the set of all possible deterministic mechanisms each of which treats a different agent as dictator, and locates one of the facilities that the dictator approves at her position. The lemma follows since the agents cannot misreport their positions and no agent has any incentive to lie about which facilities she approves when she is chosen as a dictator; also note that any coalition of agents that does not include the dictator cannot change the outcome of the mechanism. 
\end{proof}

To show that the approximation ratio of RD is $3/2$, we will exploit a series of structural properties of the worst-case instances, in which the approximation ratio of the mechanism is maximized. In particular, we will show that every worst-case instance is characterized by the following three properties:
\begin{itemize}
\item There are no agents that approve both facilities (Lemma~\ref{lem:RD-11}).
\item Every agent that approves the non-optimal facility is positioned at $0$ or $1$ (Lemma~\ref{lem:RD-01}).
\item Every agent that approves the optimal facility is positioned at $0$ or some median position $x \in [0,1]$ or $1$ (Lemma~\ref{lem:RD-10}). 
\end{itemize}
To show these properties, we will start with an arbitrary instance and gradually change the preferences and the positions of the agents in a specific order so that the aforementioned properties are satisfied. Every change we make leads to a transformed instance in which the approximation ratio of RD does not decrease. It then suffices to define a worst-case instance satisfying these properties and bound the approximation ratio of the mechanism for this instance. This will be a function of a handful of variables representing the number of agents that are positioned at $\{0,x,1\}$ and approve one of the two facilities. 

The first property is quite easy to observe, given the definition of the mechanism. 

\begin{lemma}\label{lem:RD-11}
In a worst-case instance, all agents approve one facility.
\end{lemma}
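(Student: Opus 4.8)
The plan is to show that any agent who approves both facilities can be replaced by an agent who approves only one facility, without decreasing the approximation ratio of RD. Fix a worst-case instance $I$ and suppose some agent $i$ has $\tbf_i = (1,1)$. Let $o$ be the optimal facility for $I$ (if both are optimal, pick either). I would modify $i$'s preference to $\tbf_i' = (t)$ where the single approved facility is the one not equal to $o$ — call it $\bar o$ — obtaining a new instance $I'$ with the same positions. I then need to argue two things: (a) the optimal social welfare does not decrease, i.e.\ $W^*(I') \ge W^*(I)$, and (b) the expected social welfare of RD does not increase, i.e.\ $W(\mathrm{RD}(I')\,|\,I') \le W(\mathrm{RD}(I)\,|\,I)$. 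Together these give $\rho$ on $I'$ at least $\rho$ on $I$, so by worst-case optimality we may assume no agent approves both facilities.

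For part (b), the key observation is how RD treats agent $i$ when $i$ is the dictator: in $I$, if $i$ is chosen, RD locates the \emph{optimal} facility at $x_i$, contributing to the welfare the term $1 - d(x_i, x_i) = 1$ from $i$ plus whatever the other agents approving $o$ get; in $I'$, if $i$ is chosen, RD locates $\bar o$ at $x_i$. Since $o$ was (weakly) the better global choice, the welfare realized in the event ``$i$ is dictator'' can only drop (or stay equal) when we force $\bar o$ instead of $o$. For every other dictator $k \ne i$, the chosen facility and location are unchanged between $I$ and $I'$ except that agent $i$'s own utility in that outcome may change: if RD (with dictator $k$) locates facility $1$, then $i$ contributed $1 - d(x_i, y)$ in $I$ but possibly $0$ in $I'$ if $\bar o = 2$ — again a decrease. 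So in every one of the $n$ equiprobable events the realized welfare weakly decreases, giving (b). For part (a), note that the optimal outcome for $I$ places some facility $j^\star$ somewhere; that same outcome is still available in $I'$, and only agent $i$'s contribution to it could change. I would choose the direction of the modification (which facility $i$ keeps) precisely so that this does not hurt: if the optimal outcome for $I$ uses facility $o$, then in $I'$ agent $i$ no longer approves $o$, which \emph{could} lower $W^*(I')$ — so I instead need to compare against the best outcome in $I'$, not the old optimal one.

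The subtle point — and the step I expect to be the main obstacle — is reconciling (a) and (b), since they seem to pull in opposite directions: to make RD worse we want $i$ to stop approving the optimal facility $o$, but to keep $W^*$ high we want $i$ to keep approving $o$. The resolution is to argue by cases on whether $n_o$ (the count of agents approving $o$) stays the dominant count. If after removing $i$'s approval of $o$ the optimal facility of $I'$ is still $o$ (located at the same median, say), then $W^*(I') \ge W^*(I) - 1$, and I would instead track the ratio more carefully, or — cleaner — first reduce to the case where $i$ keeps approving $o$ and drops $\bar o$, handle that, and then separately handle the symmetric reduction. Actually the cleanest route is: if $i$ keeps only $o$, then $W^*$ is unchanged (the old optimal outcome, if it used $o$, is untouched; if it used $\bar o$, it may only improve to use $o$), and RD's welfare weakly decreases because in the event ``$i$ is dictator'' RD already located $o$, and in every other event $i$'s contribution to the realized outcome weakly decreases. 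This single direction suffices: it strictly removes no agent's both-approval but converts $(1,1)$ to $(t_{i,o}=1)$, and iterating over all such agents yields an instance with no both-approvers and $\rho$ no smaller than in $I$. I would write it this way and relegate the ``$i$ keeps $\bar o$'' direction to the other lemmas where it is actually needed.
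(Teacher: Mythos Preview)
Your final route---have each $(1,1)$ agent keep only the optimal facility $o$---is exactly the paper's argument: $W^*$ is unchanged because the set of agents approving $o$ is the same, and RD's expected welfare can only drop since the event ``$i$ is dictator'' is unaffected while in any event where the non-optimal facility is placed agent $i$ no longer contributes. The opening detour through ``keep $\bar o$'' runs into precisely the obstacle you spotted and should simply be deleted from the write-up.
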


\begin{proof}
Consider an arbitrary instance in which there is a set of agents that approve both facilities. If the mechanism chooses any such agent, then the optimal facility is placed at the position of this agent. By transforming the preference of the agent so that she only approves the optimal facility, the optimal welfare remains unaffected, whereas the welfare achieved by the mechanism can only decrease. This is because the agent does not contribute to the welfare gained when the dictator is an agent who approves only the non optimal facility. 
\end{proof}

Next, we show the second property. 

\begin{lemma}\label{lem:RD-01}
In a worst-case instance, every agent that approves a non-optimal facility is positioned at $0$ or $1$. 
\end{lemma}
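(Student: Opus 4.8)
The plan is the following. By Lemma~\ref{lem:RD-11} we may restrict attention to a worst-case instance $I$ in which every agent approves exactly one facility; write $S_1$ and $S_2$ for the sets of agents approving facility $1$ and facility $2$ respectively, and assume without loss of generality that facility $1$ is an optimal facility, so that the agents approving a non-optimal facility are exactly those in $S_2$. Fix some $i \in S_2$ with $x_i \in (0,1)$. I will show that relocating agent $i$ to one of the two endpoints of $[0,1]$ yields an instance on which the approximation ratio of RD is no smaller, and then iterate this over all agents of $S_2$ that are not already at an endpoint.

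The heart of the argument is to view the expected social welfare of RD as a function of the single variable $x_i$, with all other positions and all preferences held fixed. Since no agent approves both facilities, when the random dictator $k$ belongs to $S_1$ the mechanism places facility $1$ at $x_k$ and collects $\sum_{\ell \in S_1}(1-|x_\ell-x_k|)$, while when $k \in S_2$ it places facility $2$ at $x_k$ and collects $\sum_{\ell \in S_2}(1-|x_\ell-x_k|)$. Only the contributions of dictators in $S_2$ involve $x_i$; expanding $\sum_{k \in S_2}\sum_{\ell \in S_2}(1-|x_k-x_\ell|) = |S_2|^2 - \sum_{k,\ell \in S_2}|x_k-x_\ell|$ and collecting the terms in which the index $i$ occurs (as $k=i$ and as $\ell=i$, each contributing $g(x_i)$) gives
\[
n \cdot W(\text{RD}(I)|I) = C - 2\,g(x_i), \qquad g(x_i) := \sum_{\ell \in S_2 \setminus \{i\}} |x_\ell - x_i|,
\]
where $C$ does not depend on $x_i$ and $n=|S_1|+|S_2|$. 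As a sum of absolute values, $g$ is convex on $[0,1]$, so $g(x_i) \le \max\{g(0), g(1)\}$; equivalently, $W(\text{RD}(I)|I)$ is a concave function of $x_i$, minimized at whichever endpoint $e^{\star} \in \{0,1\}$ maximizes $g$. Hence moving agent $i$ to $e^{\star}$ does not increase the expected social welfare of RD.

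It remains to control the optimal welfare. Since $i \notin S_1$, the value $\max_y W(1,y|I)$ does not depend on $x_i$ at all. Therefore, after moving $i$ to $e^{\star}$, either facility $1$ remains optimal, in which case $W^*(I)$ is unchanged and the RD welfare has not increased, so the ratio does not decrease; or facility $2$ becomes optimal, in which case $W^*(I)$ does not decrease and the RD welfare has not increased, so again the ratio does not decrease. Either way we obtain an instance with RD-approximation ratio at least that of $I$ in which agent $i$ now lies at an endpoint, and this move displaces no other agent, so repeating the argument (each time with respect to the current non-optimal facility) places every agent approving the non-optimal facility at $0$ or $1$ after finitely many steps. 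The step I expect to need the most care is precisely this bookkeeping: a relocation can flip which facility is optimal, so one must verify — as above — that the ratio remains non-decreasing in that case and that the iteration terminates; the welfare identity itself and the convexity of $g$ are then routine, the only computational subtlety being to check that the dictator-$k=i$ term, in which $x_i$ plays the double role of facility location and agent position, is correctly absorbed into the constant $C$.
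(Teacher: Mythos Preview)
Your proof is correct. The core idea matches the paper's --- move agents approving the non-optimal facility to an endpoint one at a time, showing each move does not increase RD's expected welfare while the optimal welfare from facility~$1$ is unaffected --- but you streamline two points. First, you observe directly that $n\cdot W(\text{RD})=C-2g(x_i)$ with $g$ convex, so \emph{some} endpoint minimizes RD's welfare; the paper instead partitions $S_2$ into a left half $L$ and a right half $R$, sends $L$ to $0$ and $R$ to $1$ in a prescribed order, and verifies via the counting inequality $|S_{<\ell}|\le|S_{>\ell}|$ that each such move pushes the agent toward the endpoint that does not decrease the sum of distances --- which is exactly your convexity statement specialized to the ``correct'' endpoint. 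Second, you allow the optimal facility to flip mid-process and simply note that a flip can only raise $W^*$ while RD's welfare has not risen; the paper instead invests effort in arguing that facility~$1$ remains optimal throughout, treating the median agent $m_L\in L$ separately. Your convexity framing is cleaner and the flip-handling is more robust (and termination is immediate since each step strictly increases the number of agents at endpoints), while the paper's explicit left/right split makes the eventual shape of the worst case a bit more concrete for the next lemma.
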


\begin{proof}
Consider an arbitrary instance in which each agent approves a single facility (from Lemma~\ref{lem:RD-11}). Without loss of generality, assume that the optimal facility is $1$, and let $S$ be the set of agents that approve the non-optimal facility $2$. We order the agents in $S$ in terms of their positions, such that $x_1 \leq x_2 \leq ... \leq x_{|S|}$. We partition $S$ into two sets $L$ and $R$, such that $L$ consists of the first $\lceil |S|/2 \rceil$ agents in $S$ and $R = S \setminus L$. Observe that if the number of agents in $S$ is odd, the unique median agent in $S$ is included in $L$, whereas if the number of agents in $S$ is even, one of the two medians in $S$ is included in $L$, while the other is included in $R$. To prove the lemma, we claim that moving the agents in $L$ to $0$ from left to right, and the agents in $R$ to $1$ from right to left, leads to a sequence of instances such that the approximation ratio of RD does not decrease between consecutive instances. Due to symmetry, it suffices to prove this claim only for instances obtained by moving the agents in $L$. 

Let $m_L$ denote the median agent that is included in $L$ and observe that facility $1$ will remain optimal after moving any agent $\ell \in L \setminus \{m_L\}$. In particular, since the agents in $L \setminus \{\ell\}$ are not moved, their contribution to the maximum welfare achieved from facility $2$, i.e. when it is placed at the position of $m_L$, remains the same. On the other hand, the contribution of agent $\ell$ (who is moved) cannot increase as her distance from the median agent(s) in $S$ either remains the same (if $x_\ell=0$) or increases (if $x_\ell > 0$). Once all the agents in $L \setminus \{m_L\}$ have been moved to $0$, we can also show that moving agent $m_L$ to $0$ cannot increase the maximum possible welfare from facility $2$. This follows directly by the discussion below, where we show that the expected welfare of RD does not increase each time we move an agent in $L$, and thus the approximation ratio does not decrease. 

Suppose it is time to move agent $\ell \leq |L|$ to $0$, that is, it holds that $x_i=0$ for every $i < \ell$. The expected welfare of RD can be partitioned into the contribution $W_S$ of agents in $S$ (that approve only facility $2$) and the contribution $W_{\overline{S}}$ of the remaining agents (that approve only facility $1$). Clearly, changing the position of any agent in $S$ does not affect $W_{\overline{S}}$. We can write $W_S$ as 
\begin{align*}
W_S 
&= \frac{1}{n} \sum_{i \in S} \sum_{j \in S} \big( 1 - d(x_i,x_j) \big) \\
&=  \frac{1}{n} \bigg( 
\sum_{i \in S \setminus\{ \ell \}  } \sum_{j \in S \setminus\{ \ell \} } \big( 1 - d(x_i,x_j) \big) 
+ 2\sum_{i \in S \setminus\{ \ell \} } \big( 1 - d(x_i,x_\ell) \big)
+ 1
\bigg).
\end{align*}
After moving agent $\ell$ to $x_\ell'=0$, the first double sum in the above expression and the constant $1$ will remain unaffected. So, we will focus on the second sum, and show that it cannot increase. We define the set $S_{<\ell}$ of agents in $S$ before $\ell$ and the set $S_{> \ell}$ of agents in $S$ after $\ell$. By the definition of the set $L$, it holds that $|S_{< \ell}| \leq |S_{> \ell}|$, with the equality holding only in the case where the number of agents in $S$ is odd and $\ell = m_L$. Moreover, by the definition of agent $\ell$, it holds that $x_i=0$ for every $i \in S_{< \ell}$.
We now have that
\begin{align*}
\sum_{i \in S \setminus \{ \ell \} } \big( 1 - d(x_i,x_\ell) \big) 
&= \sum_{i \in S_{<\ell}}\big( 1 - d(x_i,x_\ell) \big)  + \sum_{i \in S_{> \ell}}\big( 1 - d(x_i,x_\ell) \big) \\
&= \sum_{i \in S_{<\ell}}\big( 1 - (x_\ell-x_i) \big) + \sum_{i \in S_{> \ell}}\big( 1 - (x_i-x_\ell) \big) \\
&=  \sum_{i \in S_{<\ell}}\big( 1 - (0-x_i)  \big) + \sum_{i \in S_{> \ell}}\big( 1 - (x_i-0)  \big)
+ |S_{>\ell}| - |S_{<\ell}| \\
&\geq  \sum_{i \in S_{<\ell}}\big( 1 - (x_\ell'-x_i) \big) + \sum_{i \in S_{> \ell}}\big( 1 - (x_i-x_\ell') \big) \\
&=  \sum_{i \in S \setminus \{ \ell \} } \big( 1 - d(x_i,x_\ell') \big), 
\end{align*}
as desired. This last sequence of equalities and inequalities also shows that moving $m_L$ to $0$ cannot increase the maximum possible social welfare from facility $2$, as claimed above.
\end{proof}

The third and last property follows by a proof similar to the one of the lemma above. 

\begin{lemma}\label{lem:RD-10}
Let $x \in [0,1]$ be the position of the median agent among those that approve the optimal facility in a worst-case instance. Then, every agent that approves the optimal facility is positioned at $0$ or $x$ or $1$.
\end{lemma}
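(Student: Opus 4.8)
The plan is to mimic the structure of the proof of Lemma~\ref{lem:RD-01}, but now applied to the set $T$ of agents that approve the \emph{optimal} facility (say facility $1$), after we have already applied Lemmas~\ref{lem:RD-11} and~\ref{lem:RD-01} so that all agents approve a single facility and every agent approving the non-optimal facility sits at $0$ or $1$. Fix a median position $x$ among the agents in $T$. Partition $T$ into those with position $\le x$ and those with position $\ge x$ (splitting the medians between the two groups as in Lemma~\ref{lem:RD-01}); I would then move the left group leftward one agent at a time toward $x$, and then those among them that reach past... actually, the cleaner formulation: move every agent in $T$ with position in $(0,x)$ rightward to $x$ \emph{or} leftward to $0$, whichever does not decrease the ratio, and symmetrically move every agent in $T$ with position in $(x,1)$ to $x$ or to $1$. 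The key point is that for an agent $t\in T$ with $x_t\in(0,x)$, the expected welfare of RD, viewed as a function of $x_t$ alone (all other positions fixed), is \emph{convex}: the contribution of $t$ to the RD welfare is $\frac1n\sum_{k} (1-d(x_t,x_k))$ over all agents $k$ it pairs with when it is the dictator or when $k\in T$ is the dictator, which is a sum of terms $-|x_t-x_k|$ plus constants, hence concave — wait, that is concave, so the welfare is \emph{concave} in $x_t$, meaning the ratio $W^*/W$ is \emph{convex-like} in $x_t$ and is maximized at an endpoint of the feasible interval $[0,x]$. So each such agent can be pushed to $0$ or to $x$ without decreasing the ratio, and symmetrically for $(x,1)$.

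The steps, in order, would be: (1) invoke Lemmas~\ref{lem:RD-11} and~\ref{lem:RD-01} to assume all agents approve one facility and all non-optimal-facility agents are at $0$ or $1$; (2) let $T$ be the agents approving the optimal facility $1$, with median position $x$; (3) argue that moving a single agent $t\in T$ with $x_t\in(0,x)$ changes $W^{\mathrm{RD}}$ only through $W_T$ (the within-$T$ contribution) plus the ``$t$ is dictator'' term, and that moving $t$ never affects $W^*$ adversely — more precisely, the optimal welfare is achieved by facility $1$ at $x$ (or a position between the two medians of $T$), and moving $t$ toward either $0$ or $x$ cannot increase $W^*$, while $W^*$ is \emph{non-increasing} under whichever move we pick to push $t$ to an endpoint, so it suffices to track $W^{\mathrm{RD}}$; (4) show $W^{\mathrm{RD}}$ is concave (piecewise-linear) in $x_t$, so on $[0,x]$ its minimum — equivalently the ratio's maximum — is attained at $x_t\in\{0,x\}$; (5) repeat for all agents of $T$ in $(0,x)$ and, by symmetry, for all agents of $T$ in $(x,1)$, noting that fixing earlier agents at $0$ or $x$ or $1$ only makes later single-agent moves easier (the relevant sums still decompose as in Lemma~\ref{lem:RD-01}); (6) handle the median agent(s) of $T$ separately as in Lemma~\ref{lem:RD-01}, observing that once the rest of $T$ is at $\{0,x,1\}$ moving a median to $0$ or $1$ or leaving it at $x$ is covered by the same computation, and the position $x$ in the statement is exactly this median position in the resulting instance.

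I would present step (4) via an explicit pairing argument analogous to the displayed computation in Lemma~\ref{lem:RD-01}: write $W_T=\frac1n\sum_{i\in T}\sum_{j\in T}(1-d(x_i,x_j))$, isolate the terms involving $t$, and bound $2\sum_{i\in T\setminus\{t\}}(1-d(x_i,x_t)) + 1$ plus the cross-term with the non-optimal agents (who are at $0$ or $1$, so those distances are also linear in $x_t$). Since $|T_{<t}|$ and $|T_{>t}|$ compare in the right direction after the partition, sending $x_t$ to $0$ (if $|T_{<t}|\le|T_{>t}|$) or to $x$ does not increase this sum. The main obstacle — and the part needing care rather than cleverness — is the bookkeeping in step (5)--(6): when we move agents of $T$ toward $x$, the median position of $T$ itself can shift, so I must either process the agents in an order that keeps a fixed median pinned (e.g., move the non-median agents first, exactly as Lemma~\ref{lem:RD-01} moves $L\setminus\{m_L\}$ before $m_L$), or argue that the statement only asserts existence of \emph{some} median position $x$ in the final instance, so it is enough that the final configuration has every $T$-agent at $0$, at the final median $x$, or at $1$. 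I expect the latter phrasing to be the cleanest, and I would make sure the argument that ``$W^*$ does not increase under the chosen move'' is stated once, carefully, pointing to the same inequality chain used at the end of the proof of Lemma~\ref{lem:RD-01}.
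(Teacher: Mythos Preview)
Your high-level plan is sound and close to the paper's, but there is a real gap in the core step. In step~(3) you assert that moving $t$ toward either $0$ or $x$ cannot increase $W^*$; this is false --- moving $t$ toward the median $x$ \emph{increases} $W^*$, since agent $t$ gets closer to the optimal location of facility~$1$. So ``it suffices to track $W^{\mathrm{RD}}$'' is unjustified. In step~(4) you then say that concavity of $W^{\mathrm{RD}}$ in $x_t$ places its minimum on $[0,x]$ at an endpoint, and call this ``equivalently the ratio's maximum''. That equivalence does not hold when the numerator varies: a concave denominator alone does not force $W^*/W^{\mathrm{RD}}$ to be maximized at an endpoint. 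Your earlier phrase ``convex-like'' hints at the right fix (linear numerator over positive concave denominator is quasi-convex, since the sublevel set $\{x_t: W^* \le c\,W^{\mathrm{RD}}\}$ is the sublevel set of a convex function), but you never actually carry this out, and the argument you do write is not correct.

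The paper sidesteps the issue by working on the smaller interval $[0,y]$ where $y$ is the position of the next agent strictly to the right of $\ell$. On that interval the ordering of agents is fixed, so \emph{both} $W^*$ and $W^{\mathrm{RD}}$ are \emph{linear} in $x_\ell$ (with $W^{\mathrm{RD}}$ having positive slope because $|S_{>\ell}| > |S_{\leq\ell}|$), and a ratio of two affine functions is monotone. Hence the ratio is maximized at $0$ or at $y$; if at $y$, iterate until you reach $0$ or $x$. Two smaller points: there is no ``cross-term with the non-optimal agents'' in $W^{\mathrm{RD}}$ --- when $t$ is dictator, facility~$1$ is placed and those agents get zero, and when one of them is dictator, $t$ gets zero; and you should explicitly handle the possibility that moving agents of $T$ toward $0$ makes facility~$2$ optimal, which the paper does by falling back to the procedure of Lemma~\ref{lem:RD-01}.
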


\begin{proof}
Consider an arbitrary instance in which each agent approves a single facility (from Lemma~\ref{lem:RD-11}), and all agents approving the non-optimal facility are positioned at $0$ or $1$ (from Lemma~\ref{lem:RD-01}). Without loss of generality, assume that the optimal facility is $1$, and let $S$ be the set of agents that approve it. We will gradually transform this instance into one that satisfies the conditions of the lemma by appropriately moving the agents, such that each time we move an agent the approximation ratio of RD does not decrease. Let $L$ and $R$ be the sets of the non-median agents in $S$ who are at the left and at the right of the median agent(s) in $S$, respectively. Due to symmetry, it suffices to focus only on $L$ and show that the agents therein can be moved either to $0$ or $x$. We will also assume that facility $1$ remains optimal throughout the whole process;  observe that this is without loss of generality since if facility $2$ becomes the optimal one at some point of the process, then by replicating the procedure used in the proof of Lemma~\ref{lem:RD-01}, we can move all agents in $S$ to $0$ or $1$, thus obtaining the desired structure. 

Suppose the agents in $S$ are ordered in terms of their positions, such that $x_1 \leq x_2 \leq ... \leq x_{|S|}$. Let $\ell$ be the left-most agent in $L$ who is not positioned at $0$ or $x$, that is, $x_\ell < x$ and $x_i = 0$ for every $i < \ell$. We can write the optimal social welfare as 
\begin{align*}
W^* 
= \sum_{i \in S} \big( 1-d(x_i,x) \big) 
= \sum_{i \in S\setminus\{\ell\}} \big( 1-d(x_i,x) \big) + 1-(x - x_\ell)
= A + x_\ell.
\end{align*}
Let $W_S$ be the contribution of the agents in $S$ to the expected welfare achieved by RD, and let $W_{\overline{S}}$ denote the contribution of the remaining agents who approve facility $2$. We have 
\begin{align*}
W_S 
&= \frac{1}{n} \sum_{i \in S} \sum_{j \in S} \big( 1 - d(x_i,x_j) \big) \\
&=  \frac{1}{n} \bigg( 
\sum_{i \in S \setminus \{ \ell \} } \sum_{j \in S \setminus \{ \ell \} } \big( 1 - d(x_i,x_j) \big) 
+ 2\sum_{i \in S \setminus \{ \ell \} } \big( 1 - d(x_i,x_\ell) \big)
+ 1
\bigg).
\end{align*}
Let $S_{\leq \ell}$ be the set of agents in $S$ different than $\ell$ with position at most $x_\ell$, and let $S_{> \ell}$ be the set of agents in $S$ with position strictly larger than $\ell$. By the definition of $\ell$ (who belongs to $L$ and is the left-most agent that is not positioned at $0$ or $x$), it holds that  $|S_{\leq \ell}| < |S_{> \ell}|$. 
Therefore,  
\begin{align*}
W_S &= 
\frac{1}{n} \bigg( 
\sum_{i \in S \setminus \{ \ell \} } \sum_{j \in S \setminus \{ \ell \} }  \big( 1 - d(x_i,x_j) \big) 
+ 2\sum_{i \in S_{\leq \ell}}\big( 1 - d(x_i,x_\ell) \big)  
+ 2\sum_{i \in S_{> \ell}}\big( 1 - d(x_i,x_\ell) \big)
+ 1
\bigg) \\
&= 
\frac{1}{n} \bigg( 
\sum_{i \in S \setminus \{ \ell \} } \sum_{j \in S \setminus \{ \ell \} l} \big( 1 - d(x_i,x_j) \big) 
+ 2\sum_{i \in S_{\leq \ell}}\big( 1 + x_i \big)  
+ 2\sum_{i \in S_{> \ell}}\big( 1 - x_i \big)
+ 1
+ 2 x_\ell \big( |S_{> \ell}| - |S_{\leq \ell}| \big)
\bigg) \\
&= 
B + \frac{2}{n} x_\ell \big( |S_{> \ell}| - |S_{\leq \ell}| \big).
\end{align*}
Hence, the approximation ratio of RD is
\begin{align*}
\rho(RD) = \frac{A + x_\ell}{ W_{\overline{S}} + B + \frac{2}{n} x_\ell \big( |S_{> \ell}| - |S_{\leq \ell}| \big) }.
\end{align*}
Now observe that, since $|S_{> \ell}| - |S_{\leq \ell}| > 0$, the ratio is a monotonic function of $x_\ell$. If it is decreasing, we can move agent $\ell$ to $0$, and if it is increasing, we can move agent $\ell$ to the position $y \in (x_\ell, x]$ of the first agent that lies strictly to the right of $\ell$. Therefore, we can always move the left-most agent in $L$ who is not positioned at $0$ or $x$, to either $0$ or strictly to the right. At some point, this procedure will lead all agents in $L$ to be positioned either at $0$ or $x$, and symmetrically, all agents in $R$ to be positioned either at $x$ or $1$. 
\end{proof}

We are now ready to prove the bound on the approximation ratio of RD.

\begin{theorem}\label{thm:RD}
The approximation ratio of RD is $3/2$.
\end{theorem}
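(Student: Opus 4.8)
The plan is to use the three structural Lemmas~\ref{lem:RD-11}, \ref{lem:RD-01} and \ref{lem:RD-10} to reduce to a finite-parameter family of instances, obtain closed-form expressions for $W^*$ and for the expected welfare of RD on that family, turn the bound $\rho(\text{RD})\le 3/2$ into a single polynomial inequality, and dispatch that inequality by an elementary optimization that ends up producing a perfect square; a small instance then shows the bound is tight.

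Concretely, by the three lemmas it suffices to consider instances of the following form. The optimal facility is (wlog) facility $1$; among its supporters, $a$ sit at $0$, $b$ sit at a median position $x\in[0,1]$, and $c$ sit at $1$, where, after counting any supporters at the median into the group of size $b$, we have $a\le b+c$ and $c\le a+b$ (exactly the condition that placing facility $1$ at $x$ is optimal among facility-$1$ placements, which, together with facility $1$ being the optimal facility, gives $W^*$ below); among the supporters of facility $2$, $d$ sit at $0$ and $e$ sit at $1$. Since no agent approves both facilities (Lemma~\ref{lem:RD-11}), a dictator always locates at her own position the unique facility she approves, and a short computation gives $W^*=a(1-x)+b+cx=a+b+x(c-a)$ and
$$ n\cdot W(\text{RD}(I)|I) \;=\; a\,w_0 + b\,W^* + c\,w_1 + d^2 + e^2, $$
where $n=a+b+c+d+e$, $w_0=a+b(1-x)$ is the welfare when a facility-$1$ supporter at $0$ is the dictator, $w_1=bx+c$ the welfare when one at $1$ is the dictator, and $d$ (resp.\ $e$) the welfare when a facility-$2$ supporter at $0$ (resp.\ $1$) is the dictator.

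Now $\rho(\text{RD})\le 3/2$ is equivalent to $2nW^*\le 3\big(a w_0 + b W^* + c w_1 + d^2 + e^2\big)$. Substituting $w_0$ and $w_1$ and using $x(c-a)=W^*-a-b$ to eliminate $x$, the facility-$1$ terms collapse and the inequality becomes
$$ 2(a+c-2b)\,W^* - 3\big(a^2+c^2-b^2\big) + \big(2dW^*-3d^2\big) + \big(2eW^*-3e^2\big) \;\le\; 0. $$
Since $2dW^*-3d^2\le \tfrac13(W^*)^2$ for every real $d$ (and likewise for $e$, the slack being $3(d-W^*/3)^2\ge 0$), it suffices to prove $\tfrac23(W^*)^2 + 2(a+c-2b)W^* \le 3(a^2+c^2-b^2)$. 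As a function of $W^*$ the difference of the two sides is a convex quadratic, while $W^*=a+b+x(c-a)$ ranges over $[\,b+\min(a,c),\ b+\max(a,c)\,]$ as $x$ runs over $[0,1]$, so the maximum is attained at an endpoint; at $W^*=b+c$ the expression equals $-\tfrac13(3a-b-c)^2\le 0$ and at $W^*=a+b$ it equals $-\tfrac13(a+b-3c)^2\le 0$. This establishes $\rho(\text{RD})\le 3/2$.

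For the matching lower bound I would take the instance with four agents approving only facility $1$ -- three at position $0$ and one at position $1$ -- and two agents approving only facility $2$, one at $0$ and one at $1$: here $W^*=3$ (facility $1$ at $0$), whereas RD obtains $\tfrac16(3\cdot 3 + 1 + 1 + 1)=2$, giving $\rho=3/2$ on an instance of the known-positions setting where RD is strategyproof (Theorem~\ref{thm:RD-sp}). I expect the main obstacle to be the reduction bookkeeping rather than the final inequality: one must be careful about which configurations the structural lemmas actually leave open (the meaning of the median parameter $x$, the recounting of median agents, and the assumption that facility $1$ stays optimal throughout), and about carrying out the substitution so that the facility-$1$ contribution telescopes into the clean convex-quadratic form above; once that form is reached, optimizing over $d$, $e$ and over $x$ is routine and the perfect squares make the bound transparent.
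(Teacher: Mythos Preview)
Your proof is correct and uses the same structural reduction via Lemmas~\ref{lem:RD-11}--\ref{lem:RD-10} as the paper, but the final optimization is carried out quite differently. The paper writes the approximation ratio as a rational function of $\alpha_0,\alpha_x,\alpha_1,\beta_0,\beta_1,x$, argues monotonicity in $x$ to reduce to $x\in\{0,1\}$, and then appeals to a calculus optimization (nullifying partial derivatives of the resulting four-variable rational function) to conclude that the maximum is $3/2$. You instead rewrite the target inequality $\rho\le 3/2$ as a single polynomial inequality, use the key identity $aw_0+bW^*+cw_1=a^2+c^2-b^2+2bW^*$ to collapse the facility-$1$ terms, maximize over $d$ and $e$ by completing the square (each contributing at most $(W^*)^2/3$), and then exploit convexity in $W^*$ to check only the two endpoints, where the expression becomes $-\tfrac13(3a-b-c)^2$ and $-\tfrac13(a+b-3c)^2$. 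This is more elementary than the paper's calculus step and arguably more transparent: the perfect squares make it visible that equality forces $d=e=W^*/3$ and $3a=b+c$ (or $a+b=3c$), which is exactly the paper's tight configuration $\alpha_0=3\alpha_1=3\beta_0=3\beta_1$ that you also exhibit. The trade-off is that the paper's monotonicity-then-calculus route is more mechanical, while yours requires spotting the collapsing identity and the square completions.
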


\begin{proof}
Consider a worst-case instance $I$ in which the optimal facility is $1$. From Lemmas~\ref{lem:RD-11}, \ref{lem:RD-01} and \ref{lem:RD-10}, we can assume that there are $n = \alpha_0 + \alpha_x + \alpha_1 + \beta_0 + \beta_1$ agents in total, such that:
\begin{itemize}
\item $\alpha_0$ agents approve only facility $1$ and are positioned at $0$;
\item $\alpha_x$ agents approve only facility $1$ and are positioned at some $x \in [0,1]$;
\item $\alpha_1$ agents approve only facility $1$ and are positioned at $1$;
\item $\beta_0$ agents approve only facility $2$ and are positioned at $0$;
\item $\beta_1$ agents approve only facility $2$ and are positioned at $1$.
\end{itemize} 
Since $x$ is the position of the median agent among those that approve facility $1$, we have that $\alpha_0 + \alpha_x \geq \alpha_1$ and $\alpha_x + \alpha_1 \geq \alpha_0$. We also have that $\beta_0 \geq \beta_1$. The optimal social welfare is achieved by placing facility $1$ at position $x$, and is equal to
\begin{align*}
W^*(I) =  \alpha_0 \cdot (1-x) +\alpha_x + \alpha_1 \cdot  x = \alpha_0 + \alpha_x + (\alpha_1-\alpha_0) x.
\end{align*}
Since facility $2$ is not optimal, we have that $W^* \geq \beta_0 \geq \beta_1$. 
The expected welfare of RD is
\begin{align*}
W(\text{RD}(I)) &= 
\frac{1}{n} \bigg(
\alpha_0 \big( \alpha_0 + \alpha_x(1-x) \big)
+ \alpha_x  \big( \alpha_0 (1-x) +\alpha_x + \alpha_1  x \big)
+ \alpha_1 \big( \alpha_x  x + \alpha_1 \big)
+ \beta_0^2 + \beta_1^2
\bigg) \\
&= \frac{1}{n} \bigg(
(\alpha_0 + \alpha_x)^2 + \alpha_1^2 
+2\alpha_x(\alpha_1-\alpha_0) x
+ \beta_0^2 + \beta_1^2
\bigg).
\end{align*}
Hence, by replacing $n = \alpha_0 + \alpha_x + \alpha_1 + \beta_0 + \beta_1$, the approximation ratio or RD can be written as the following function of $x$:
\begin{align*}
\rho(\text{RD}) 
&= (\alpha_0 + \alpha_x + \alpha_1 + \beta_0 + \beta_1) \cdot \frac{\alpha_0 + \alpha_x+ (\alpha_1-\alpha_0) x}
{(\alpha_0 + \alpha_x)^2 + \alpha_1^2  + \beta_0^2 + \beta_1^2  + 2\alpha_x(\alpha_1-\alpha_0) x}.
\end{align*}
It is now not hard to see that, since the factors of $x$ in the enumerator and the denominator are either both positive or negative, the ratio is a monotonic function in terms of $x$, and it thus attains its maximum value for $x=0$ or $x=1$. 

Therefore, we can further simplify the worst-case instance $I$ and assume that there are $n=\alpha_0 + \alpha_1 + \beta_0 + \beta_1$ agents in total, such that:
\begin{itemize}
\item $\alpha_0$ agents approve only facility $1$ and are positioned at $0$;
\item $\alpha_1$ agents approve only facility $1$ and are positioned at $1$;
\item $\beta_0$ agents approve only facility $2$ and are positioned at $0$;
\item $\beta_1$ agents approve only facility $2$ and are positioned at $1$;
\end{itemize} 
Without loss of generality, we assume that $\alpha_0 \geq \max \{ \alpha_1, \beta_0, \beta_1\}$, hence the optimal welfare is $W^*=\alpha_0$, achieved by placing facility $1$ at $0$. The expected social welfare of RD is 
$$W(\text{RD}(I)) = \frac{\alpha_0^2 + \alpha_1^2 + \beta_0^2 + \beta_1^2}{\alpha_0 + \alpha_1 + \beta_0 + \beta_1},$$
and thus the approximation ratio is
\begin{align*}
\rho(\text{RD}) = \frac{\alpha_0 (\alpha_0 + \alpha_1 + \beta_0 + \beta_1)}{\alpha_0^2 + \alpha_1^2 + \beta_0^2 + \beta_1^2}.
\end{align*}
By nullifying the partial derivatives of this function in terms of $\alpha_0$, $\alpha_1$, $\beta_0$ and $\beta_1$ we obtain a system of four equations, whose solution shows that the ratio is maximized to $3/2$ when $\alpha_0 = 3\alpha_1$ and $\alpha_1 = \beta_0 = \beta_1$.
\end{proof}

\section{Extensions to choosing $k$ out of $m$ facilities}
\label{sec:extensions}

So far we have exclusively focused on the fundamental case where there are two facilities and one of them must be located. In this section, we define and make initial progress for natural generalizations when there are $m \geq 2$ different facilities from which we can choose to locate $k < m$. There is a plethora of ways to define the utility of an agent. For instance, we can define it as the utility the agent derives from a subset of the facilities that are located and are among the ones she approves. This subset may include all such facilities, or just the facility that is the {\em closest} or the {\em farthest} from the agent's position. For such cases, it is quite easy to see that a straightforward adaptation of the Middle mechanism satisfies strategyproofness constraints and has an approximation ratio of at most $2$. In addition, by extending the proof of Theorem~\ref{thm:known-deterministic-lower}, we can show for particular values of $m$ and $k$ that this is the best-possible approximation among deterministic mechanisms, even when the preferences of the agents are assumed to be known.  

Formally, let $I = (\xbf, \tbf, m, k)$ be an instance with position profile $\xbf$, preference profile $\tbf$, and $m\geq 2$ facilities out of which we must chose and locate $k \geq 1$. Let $S$ a subset of $k$ facilities that are chosen to be located, and denote by $y_j \in [0,1]$ the location of facility $j \in S$; let $\ybf = (y_j)_{j \in S}$. Then, we can define the following three classes of utilities functions, to which we refer as {\em sum}, {\em min} and {\em max}: 
\begin{align*}
u_i^{\text{\em sum}}(S,\ybf|I) &= \sum_{j \in S} t_{ij} \cdot \big( 1 - d(x_i,y_j) \big) \\
u_i^{\text{\em min}}(S,\ybf|I) &= \max_{j \in S} \bigg\{ t_{ij} \cdot \big( 1 - d(x_i,y_j) \big) \bigg\} \\
u_i^{\text{\em max}}(S,\ybf|I) &= \min_{j \in S} \bigg\{ t_{ij} \cdot \big( 1 - d(x_i,y_j) \big) \bigg\}.
\end{align*}
For these three classes of utility functions, we will show that the $(k,m)$-Middle mechanism defined below is either group-strategyproof or strategyproof (depending on the number of facilities it must choose) and has an approximation ratio of $2$.

\medskip

\begin{tcolorbox}[title= {$(k,m)$}-Middle mechanism]
\begin{enumerate}
\item 
Count the number $n_j$ of agents that approve each facility $j \in \{1,\ldots,m\}$.
\item 
Locate the $k$ most preferred facilities at location $\frac{1}{2}$, breaking ties arbitrarily.
\end{enumerate}
\end{tcolorbox}

\begin{theorem} \label{thm:Middle-km}
For any utility class $C \in \{\text{sum}, \text{min}, \text{max}\}$, the $(k,m)$-Middle mechanism is group-strategyproof when $k=1$, strategyproof when $k \geq 2$, and has an approximation ratio of at most $2$. 
\end{theorem}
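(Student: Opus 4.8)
The plan is to treat the approximation bound and the strategyproofness claims separately, and to reduce the approximation bound to the argument already used in Theorem~\ref{thm:Middle}. For the approximation ratio, first observe that for each of the three utility classes the utility of an agent $i$ when a set $S$ of $k$ facilities is located at $\ybf$ satisfies $u_i^C(S,\ybf|I) \le \max_{j \in S} t_{ij}$, i.e.\ it is at most $1$ if $i$ approves some located facility and $0$ otherwise; moreover when every located facility sits at $1/2$, each of the three utilities of agent $i$ equals $\tfrac12$ if $i$ approves at least one of the $k$ chosen facilities (for \emph{sum} it is at least $\tfrac12$, for \emph{min} and \emph{max} it is exactly $\tfrac12$ provided $i$ approves at least one chosen facility — for \emph{max} one must be slightly careful, but since every located facility is at distance $1/2$, the min over $j\in S$ of $t_{ij}(1-d)$ is $\tfrac12$ iff $i$ approves \emph{all} located facilities and is $0$ otherwise; so for \emph{max} I would instead lower bound the welfare by the contribution of agents approving all $k$ chosen facilities, or more simply note the mechanism places all $k$ facilities at the same point so ``closest'' and ``farthest'' coincide when $i$ approves all of them). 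The cleanest route: let $n_{(1)} \ge n_{(2)} \ge \dots \ge n_{(m)}$ be the sorted approval counts, so MM-type mechanism guarantees welfare at least $\tfrac12 \sum_{\ell=1}^k n_{(\ell)}$ for \emph{sum} and \emph{min}, while $W^*(I) \le \sum_{\ell=1}^k n_{(\ell)}$ because at most $\sum_{\ell=1}^k n_{(\ell)}$ agent--facility approval incidences can be ``served'' and each contributes at most $1$; this gives ratio $2$. For \emph{max}, I would argue $W^*(I)\le \sum_{\ell=1}^k n_{(\ell)}$ still holds (an agent's max-utility is at most her best single approved located facility's utility, which is still at most $1$, and summing over agents is at most the incidence count), and the mechanism's welfare is at least $\tfrac12$ times the number of agents approving all $k$ chosen facilities — this may be weaker, so more likely the intended argument restricts attention to, or the theorem implicitly assumes, the regime making the $\tfrac12 n_{(k)}$-style bound go through; I would follow whatever bookkeeping the paper sets up, the essential point being a factor-$2$ gap between ``each served incidence worth $\ge 1/2$'' and ``each worth $\le 1$''.

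For strategyproofness, the argument mirrors the proof of Theorem~\ref{thm:Middle}. Since positions are never consulted, no agent benefits from misreporting her position, in any of the three regimes. It remains to handle misreports of preferences. The key structural fact is that the mechanism selects the $k$ facilities with the largest approval counts, breaking ties by a fixed rule. An agent $i$ can change the selected set $S$ only by changing whether some facility crosses a count threshold, and she can only \emph{decrease} a count $n_j$ by withdrawing her approval of $j$ or \emph{increase} it by adding approval of $j$. Adding a false approval for a facility $i$ does not truly approve cannot help her: it may bring that facility into $S$, but $i$ derives zero utility from it (the definitions multiply by the true $t_{ij}$), while it may push out a facility $i$ truly approves. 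Withdrawing approval for a truly approved facility can only remove it from $S$ or leave $S$ unchanged. So for $k=1$ the selected facility either stays the same or changes to one the deviating coalition does not all approve — hence no coalition member strictly gains — which yields group-strategyproofness exactly as in Theorem~\ref{thm:Middle}.

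The subtlety that forces the weaker conclusion for $k\ge 2$, and the step I expect to be the main obstacle, is that with $k\ge 2$ a single agent can still be harmed by her own misreport in a way that makes \emph{group} deviations profitable for everyone involved, even though no \emph{single} agent can profit unilaterally. For the single-agent claim: if agent $i$ misreports, the only effect she can cause on $S$ is to swap some facilities in/out near the $k$-th largest count; since all $k$ facilities go to $1/2$, her utility under \emph{sum} is $\tfrac12 \cdot |\{j \in S : t_{ij}=1\}|$ (with true $t_{ij}$), under \emph{min}/\emph{max} it is $\tfrac12 \cdot \mathbf{1}[\,\cdot\,]$; I need to check that no manipulation strictly increases the count of truly-approved facilities in $S$, which follows because a manipulation that inserts facility $j$ into $S$ must evict some $j'$, and $i$ can only force such an insertion by \emph{raising} $n_j$, which she can only do by adding approval to a $j$ she does not truly approve (true approvals are already counted), so she trades a possibly-approved evictee for a definitely-non-approved entrant — a weak inequality in her favor at best. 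For $k\ge 2$ group-strategyproofness actually fails, and that is the content of Lemma~\ref{lem:km-Middle-not-gsp-kgeq2}, so here I only claim strategyproofness; I would present a clean case analysis over the three utility classes showing the single-agent inequality $u_i^C(\mathrm{MM}(I)\mid I) \ge u_i^C(\mathrm{MM}(I')\mid I)$, the \emph{max} class again requiring the most care because an agent's \emph{max}-utility at $1/2$ is positive only if she approves \emph{every} chosen facility, so a manipulation that enlarges the truly-approved portion of $S$ but simultaneously introduces a truly-non-approved facility into $S$ drops her \emph{max}-utility to $0$ — which only reinforces that deviating cannot help. Assembling these cases gives the theorem.
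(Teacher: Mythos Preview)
Your high-level strategy matches the paper's: reduce $k=1$ to Theorem~\ref{thm:Middle}; for $k\ge 2$ derive strategyproofness from the fact that a unilateral misreport can only raise the count of a non-approved facility or lower the count of an approved one (so the number of truly-approved facilities in the selected set $S$ cannot increase, which suffices for all three utility classes since every chosen facility sits at $1/2$); and for the \emph{sum} class bound the ratio via $W_{\text{MM}}\ge\tfrac12\sum_{j\in S}n_j\ge\tfrac12\sum_{j\in O}n_j\ge\tfrac12 W^*$. On those parts your argument is correct and, for strategyproofness, considerably more explicit than the paper's one-sentence treatment.

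The step that fails is your lower bound on the mechanism's welfare for the \emph{min} class when $k\ge 2$. You assert the mechanism guarantees at least $\tfrac12\sum_{\ell=1}^k n_{(\ell)}$ for \emph{min}, but under \emph{min} each agent contributes at most once, so the welfare is only at least $\tfrac12\cdot|\{i:\exists j\in S,\ t_{ij}=1\}|$, which can be strictly smaller whenever agents approve several facilities in $S$. Concretely, take $m=3$, $k=2$, three agents all at position $0$, agents $1$ and $2$ approving $\{1,2\}$ and agent $3$ approving $\{3\}$: then $n_1=n_2=2$, $n_3=1$, so $S=\{1,2\}$ and the mechanism's \emph{min}-welfare is $1$, yet $O=\{1,3\}$ placed at $0$ achieves welfare $3$, a ratio of $3$. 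So the inequality you need simply does not hold, and the hedging you (rightly) apply to \emph{max} is needed just as much for \emph{min}. The paper itself dispatches both classes for $k\ge 2$ with a bare appeal to Theorem~\ref{thm:Middle} (``for the same reason''), which likewise does not supply the missing comparison between the coverage of $S$ and that of the optimal set $O$; so your instinct that something is off in this regime is well placed.
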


\begin{proof}
For $k=1$, the proof that the mechanism is group-strategyproof and has approximation ratio of at most $2$ follows directly by Theorem~\ref{thm:Middle} since the utility of every agent is defined by a single facility which is located at $1/2$. For the same reason, Theorem~\ref{thm:Middle} also implies that the approximation ratio of the mechanism is at most $2$ for the min and max utility classes when $k \geq 2$. 

Now consider any instance $I=(\xbf, \tbf, m, k)$ with $m > k \geq 2$. To show that the mechanism is strategyproof for any utility class, first observe that since it does not take into account the positions of the agents when deciding which subset of facilities to locate and where, the agents have no incentive to misreport their positions. In addition, no agent has any incentive to unilateraly misreport her preferences since any such misreport can only increase the count of facilities she does not approve, and thus her utility cannot increase. 

To show that the mechanism has approximation ratio at most $2$ for the sum utility class, let $S$ be the subset of $k$ facilities chosen by the mechanism, and also let $O$ be the optimal subset of $k$ facilities. We make the following simple observations:
\begin{itemize}

\item For every $j \in S$, it holds that $n_j = \sum_{i \in N} t_{ij}$.

\item Since every $j \in S$ is placed at $1/2$, it holds that $1-d(x_i,y_j) \geq 1/2$ for every agent $i$ that approves $j$. 

\item 
By the definition of the mechanism, we have that $n_j \geq n_o$ for every $j \in S$. 

\item 
Since the maximum utility of any agent is $k$, we have that $W^*(I) \leq k \cdot n_o$.
\end{itemize} 
Putting everything together, we have that 
\begin{align*}
W((k,m)\text{-MM}(I)|I) 
&= \sum_{i \in N} \sum_{j \in S} t_{ij} \cdot \left( 1 - d\left(x_i,\frac12\right) \right) \\
&\geq \frac{1}{2} \sum_{j \in S} \sum_{i \in N} t_{ij} 
= \frac{1}{2} k \cdot n_j 
\geq \frac{1}{2} k \cdot n_o \geq \frac{1}{2} W^*(I),
\end{align*}
and the bound on the approximation ratio follows.
\end{proof}

For completeness, we present a simple instance showing that the $(k,m)$-Middle mechanism is not group-strategyproof when $k\geq 2$ for any of the utility classes we consider. 

\begin{lemma} \label{lem:km-Middle-not-gsp-kgeq2}
For any utility class $C \in \{\text{sum}, \text{min}, \text{max}\}$, the $(k,m)$-Middle mechanism is not group-strategyproof when $k \geq 2$. 
\end{lemma}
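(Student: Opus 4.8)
The plan is to disprove group-strategyproofness by exhibiting small instances in which a coalition of agents, all currently receiving zero (or suboptimal) utility, jointly misreport their approval sets so that the chosen set $S$ of $k$ facilities is replaced by a set $S'$ that makes \emph{every} member of the coalition strictly better off. Since the $(k,m)$-Middle mechanism ignores positions, I would place all agents at $\tfrac12$, so that every chosen facility is located at $\tfrac12$ and the utility of an agent $i$ with true approval set $T_i$ reduces to $|S\cap T_i|$ for the sum class, to $\mathbf 1[S\cap T_i\neq\emptyset]$ for the min class, and to $\mathbf 1[S\subseteq T_i]$ for the max class. The structural point that dictates the shape of the proof is that ``unhappy'' means genuinely different things for min and max: under min an agent is unhappy only if she approves \emph{none} of the chosen facilities, whereas under max she is unhappy as soon as she fails to approve \emph{some} chosen facility, even if she approves others. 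In fact no single instance can serve all three classes at once: if every coalition member were strictly improved under min, she would have to approve nothing in $S$ and everything in $S'$, but then the coalition could neither lower the count of any facility of $S$ (it approves none of them) nor raise the count of any facility of $S'$ beyond its own size (it already approves all of them), hence it could not change the outcome. So I would use two instances.

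\textbf{An instance for the sum and min classes.} Take $m=k+2$ and the following agents: two agents approving only facility $1$, two approving only facility $2$, and three agents approving all of $\{3,\dots,k+2\}$. Then $n_1=n_2=2$ and $n_j=3$ for every $j\in\{3,\dots,k+2\}$, so the mechanism locates exactly the facilities $3,\dots,k+2$, and the four agents that approve facility $1$ or $2$ receive utility $0$ in both the sum and the min class. Now let these four agents form a coalition, each reporting the approval set $\{1,2\}$. The counts become $n_1=n_2=4>3=n_j$ for $j\ge 3$, so facilities $1$ and $2$ are now among the $k$ most approved facilities, i.e. $\{1,2\}\subseteq S'$; hence every coalition member now approves a chosen facility and gets strictly positive utility under both classes. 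All the relevant count comparisons are strict, so the conclusion does not depend on the arbitrary tie-breaking.

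\textbf{An instance for the sum and max classes.} Again take $m=k+2$ and the following agents: two ``type-$X$'' agents approving $\{1,\dots,k,k+1\}$, two ``type-$Y$'' agents approving $\{1,\dots,k,k+2\}$, and three agents approving $\{k+1,k+2\}$. Then $n_j=4$ for $j\le k$ while $n_{k+1}=n_{k+2}=5$, so $S$ consists of facilities $k+1,k+2$ together with some $k-2$ of the facilities $\{1,\dots,k\}$; every type-$X$ agent fails to approve $k+2\in S$ and every type-$Y$ agent fails to approve $k+1\in S$, so all four have max-utility $0$ and sum-utility $k-1$. Let these four agents form a coalition with type-$X$ reporting $\{1,\dots,k\}$ and type-$Y$ also reporting $\{1,\dots,k\}$. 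Then $n_j=4$ for $j\le k$ and $n_{k+1}=n_{k+2}=3$, so the mechanism chooses exactly $S'=\{1,\dots,k\}$, which is a subset of every coalition member's true approval set; hence each member's max-utility jumps from $0$ to $1$ and her sum-utility from $k-1$ to $k$. Once more every relevant count comparison is strict.

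\textbf{Main obstacle and remaining routine work.} The conceptual crux is the max class: an agent can be unhappy while still approving several chosen facilities, so the min construction cannot be reused, and the fix is the pair of symmetric types $X$ and $Y$, whose members are simultaneously ``helpers'' (each type withdraws its private facility, whose presence in $S$ is precisely what keeps the other type unhappy) and beneficiaries. Once the two instances are set up, the rest---verifying the counts, checking that all the inequalities are strict so that the arbitrary tie-breaking is irrelevant, noting that both constructions generalize verbatim to arbitrarily many agents, and observing that for $k=1$ no such coalition exists, consistently with Theorem~\ref{thm:Middle-km}---is routine.
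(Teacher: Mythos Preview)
Your proof is correct, and in fact more careful than the paper's own argument. The paper uses a single instance for all three utility classes: $m$ agents, each approving a distinct facility, with a two-agent coalition $(i,j)$ (both excluded from $S$) jointly reporting $\{i,j\}$ so that $n_i=n_j=2$ and both facilities enter $S'$. This is fine for the \emph{sum} and \emph{min} classes, but under the paper's own formula $u_i^{\text{max}}(S',\ybf)=\min_{\ell\in S'}t_{i\ell}(1-d(x_i,y_\ell))$, agent~$i$ still does not approve facility~$j\in S'$, so her max-utility stays at~$0$ and she is \emph{not} strictly improved. Your observation that the min and max classes cannot be served by the same coalition---because each coalition member would have to approve nothing in~$S$ yet all of~$S'$, and such agents can neither lower the counts of~$S$ nor raise those of~$S'$---is exactly the obstruction that the paper's single-instance argument overlooks.

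Your second construction (the $X$/$Y$ types that withdraw each other's ``bad'' facility) cleanly fixes this: after the deviation $S'=\{1,\dots,k\}$ is contained in every coalition member's true approval set, so the max-utility genuinely jumps from~$0$ to~$1$. The trade-off is that your proof requires two tailored instances and slightly larger agent sets, whereas the paper's argument is shorter but, read literally, only establishes the lemma for the sum and min classes (or for a different, ``min over approved facilities in~$S$'' interpretation of the max class that does not match the displayed formula). Both arguments restrict to $m\ge k+2$, which suffices for the lemma as stated.
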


\begin{proof}
Consider an instance with $m \geq 4$ facilities, from which we must choose and locate $k \in \{2, \ldots, m-2\}$. There are $n = m$ agents such that every agent approves a different facility; specifically, agent $i \in \{1, \ldots, m\}$ approves only facility $i$. Since $n_i = 1$ for every facility $i$, the $(k,m)$-Middle mechanism can choose any set $S$ of $k$ facilities. Then, for any utility class $C \in \{\text{\em sum}, \text{\em min}, \text{\em max}\}$, every agent approving a facility $j \not\in S$ obtains zero utility. Since $k \leq m-2$, every pair of agents $(i,j)$ approving facilities $i,j \not\in S$ have incentive to form a coalition and change their preferences so that they both approve $i$ and $j$. Such a group misreport would lead to $n_i=n_j=2$, and result in both $i$ and $j$ being part of any set of $k$ facilities chosen by the mechanism, thus showing that the agents have successfully manipulated the mechanism.
\end{proof}

By appropriately extending the proof of Theorem~\ref{thm:known-deterministic-lower}, we can show that, for any $m$ and $k \geq 1$ such that $m \geq 2k$, the approximation ratio of any deterministic mechanism is at least $2$, even when the preferences of the agents are known. As a result, the $(k,m)$-Middle mechanism is the best possible strategyproof deterministic mechanism in terms of approximation in the general and in the known-preferences settings for any such choice of $m$ and $k$. 

\begin{theorem} \label{thm:km-deterministic-LB-2}
For any utility class $C \in \{\text{sum}, \text{min}, \text{max}\}$ and any $m$, $k$ such that $m \geq 2k$, the approximation ratio of every deterministic strategyproof mechanism that locates $k$ out of $m$ facilities is at least $2-\delta$, for any $\delta > 0$, even when the preferences of the agents are known.
\end{theorem}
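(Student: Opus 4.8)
The plan is to generalize the two-instance gadget from the proof of Theorem~\ref{thm:known-deterministic-lower}. Recall that in that proof we used four agents partitioned into two ``types'' (two agents with preference $(0,1)$ and two with preference $(1,0)$), and exploited the fact that a strategyproof mechanism cannot switch away from facility $2$ when one of the $(1,0)$-agents moves from $\varepsilon$ to $1$. Here, since $m \geq 2k$, I would pick $2k$ distinct facilities and split them into two groups $A = \{a_1,\dots,a_k\}$ and $B = \{b_1,\dots,b_k\}$ of size $k$ each. The construction uses agents that each approve exactly one facility, so that all three utility classes sum, min, max coincide on single-facility approvers (each such agent's utility is just $1 - d(x_i, y)$ for the location $y$ of her approved facility, if that facility is chosen, and $0$ otherwise); this lets us handle the three cases uniformly.

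The instance $I$ is as follows. For each facility $b \in B$, place two agents approving only $b$: one at $\varepsilon \in (0,1/2)$ and one at $1$. For each facility $a \in A$, likewise place two agents approving only $a$, one at $\varepsilon$ and one at $1$. In $I$, the social welfare is essentially insensitive to which size-$k$ subset the mechanism picks and where it locates the facilities (any size-$k$ subset of $A \cup B$ gives the same count structure), so without loss of generality the mechanism selects the subset $B$ in $I$. Now form $I'$ from $I$ by moving, for each $a \in A$, the agent at $\varepsilon$ who approves $a$ to position $1$. I claim strategyproofness forces the mechanism to still pick a size-$k$ subset contained in $B$ on $I'$: if the chosen subset on $I'$ contained some $a \in A$, then on $I$ the corresponding moved agent (who approves $a$, is at $\varepsilon$, and currently gets utility $0$ since $a \notin B$) would profitably misreport her position as $1$ — this changes $I$ into $I'$ and gets $a$ selected, yielding her strictly positive utility. (One must make the argument one agent at a time, in the standard way: moving the agents of $A$ one by one and invoking strategyproofness at each step, since strategyproofness is a single-deviation condition; at each intermediate instance the welfare is still subset-insensitive so the mechanism still picks a subset of $B$, and the invariant is maintained.)

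Once we know the mechanism picks a size-$k$ subset $S \subseteq B$ on $I'$, we bound its welfare: every agent approving a facility in $B$ sits at $\varepsilon$ or at $1$, so for any placement of the facilities of $S$, the total welfare contributed is at most $k(1+\varepsilon)$ — for each of the $k$ selected facilities $b$, its two approvers contribute at most $(1-0) + (1 - (1-\text{something})) \le 1 + \varepsilon$ more carefully, the agent at $\varepsilon$ contributes at most $1$ and the agent at $1$ contributes at most $\varepsilon$ when the facility is near $\varepsilon$, or vice versa, so the per-facility contribution is at most $1 + \varepsilon$. Meanwhile, in $I'$ the optimal solution selects the subset $A$ and places each $a \in A$ at $1$, where both of $a$'s approvers now sit, giving welfare exactly $2k$. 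Hence the approximation ratio is at least $\frac{2k}{k(1+\varepsilon)} = \frac{2}{1+\varepsilon} \ge 2 - \delta$ once $\varepsilon$ is chosen small enough. The main obstacle — and the only place needing care — is the subset-insensitivity bookkeeping: verifying that on $I$ and on every intermediate instance the social welfare genuinely does not favor one size-$k$ subset over another (so that the ``without loss of generality the mechanism picks $B$'' step and the inductive invariant are legitimate), and checking this holds identically for the sum, min, and max utility classes. Because every agent approves a single facility, this reduces to a short counting check that each facility in $A \cup B$ has the same ``profile'' of approvers, which is immediate from the symmetric construction.
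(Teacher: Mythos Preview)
Your approach mirrors the paper's: build a symmetric instance, let the mechanism commit to a set of $k$ facilities, then move the $\varepsilon$-agents approving the remaining facilities to position $1$ one at a time, use strategyproofness to argue the mechanism cannot switch, and compare with the optimal welfare $2k$ in the final instance. The paper uses all $m$ facilities (two agents each) rather than only $2k$, and it lets $S$ be whatever the mechanism outputs on $I$ and then enumerates the facilities outside $S$, instead of predefining $A,B$ and invoking ``WLOG''; these are minor stylistic differences.

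The genuine gap is in your inductive step. You write that ``at each intermediate instance the welfare is still subset-insensitive so the mechanism still picks a subset of $B$,'' and that this ``reduces to a short counting check that each facility in $A\cup B$ has the same profile of approvers.'' Both claims are false. Once you have moved some $A$-agents to $1$, those facilities have both approvers at $1$ and are strictly \emph{better} (welfare $2$ each) than any facility in $B$ (welfare at most $1+\varepsilon$); the symmetry is broken, so the ``same profile'' check fails outright. More fundamentally, a strategyproof mechanism is under no obligation to choose a welfare-maximizing subset, so subset-insensitivity of the welfare---even where it holds---does not pin down the mechanism's choice at all. The paper's argument for this step does not appeal to welfare symmetry; it uses \emph{strategyproofness}: at step $j$ the agent approving facility $j$ has utility $0$ in $I_{j-1}$ (since by hypothesis $j\notin S$), so if misreporting her position as $1$ caused facility $j$ to be selected she would strictly gain---hence $j\notin S_j$. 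You should replace the symmetry bookkeeping with this SP argument. Relatedly, the initial ``WLOG the mechanism selects $B$'' is cleanest if done the paper's way: first run the mechanism on $I$, call its output $S$, and \emph{then} take the facilities outside $S$ as the ones whose agents you move; predefining $A,B$ and asserting the mechanism picks exactly $B$ requires a relabeling that is awkward if the mechanism selects facilities with no approvers.
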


\begin{proof}
Consider an arbitrary deterministic strategyproof mechanism and the following instance $I$ with $m \geq 2$ facilities, from which we must choose and locate $k \leq m/2$. There are $n=2m$ agents, such that there are exactly two agents that approve only facility $j \in \{1, \ldots, m\}$. For every facility $j$, one of the agents that approve it is positioned at some $\varepsilon \in (0, 1/2)$, while the other such agent is positioned at $1$.
Let $S$ be the subset of $k$ facilities which the mechanism chooses to locate. Clearly, since every agent approves a single facility, her utility is the same under any utility class. In particular, every agent that approves a facility $j \not\in S$ has utility $0$ in $I$, and every pair of agents that approve a facility $j \in S$ have combined utility at most $1+\varepsilon$ (when $j$ is located anywhere in the interval $[\varepsilon, 1]$). Hence, the social welfare of the mechanism in $I$ is at most $(1+\varepsilon)k$.

Now, let us enumerate the facilities not included in $S$ as $\{1, \ldots, m-k\}$. Consider a sequence of instances $I_0=I, I_1, \ldots, I_{m-k}$ such that instance $I_j$, $j \in \{1, \ldots m-k\}$ is obtained from instance $I_{j-1}$ by moving the agent that is positioned at $\varepsilon$ and approves facility $j$ to $1$. Since $I_0=I$ and every pair of instances $(I_j, I_{j-1})$ differ only on the position of a single agent that approves a facility not in $S$, the mechanism must choose to locate the same set $S$ of $k$ facilities so that the utility of the agents that are moved is not increased from zero to positive; otherwise, the mechanism would not be strategyproof. 

In the last instance $I_{m-k}$ of this sequence, all the agents that approve facilities not in $S$ are located at $1$. Since $m \geq 2k$, there exists a subset $S'$ of $k$ facilities such that $S \cap S' = \varnothing$ which can be located at $1$ to achieve a social welfare of $2k$; each of the two agents whose facility is chosen has utility equal to $1$. Since the agents approving facilities in $S$ are at the same positions as in $I$, by locating the set $S$ of facilities, the social welfare of the mechanism in $I_{m-k}$ is again at most $(1+\varepsilon)k$. The bound of $2-\delta$ on the approximation ratio of the mechanism follows by selecting $\varepsilon$ to be arbitrarily small. 
\end{proof}

\section{Conclusion and open problems}
There are several interesting problems that either remain open or arise from our work. The first natural direction is to tighten our results for deterministic and randomized mechanisms for the different settings we have considered. For deterministic mechanisms, while the general and the known-preferences settings are resolved by our work, it would still be quite interesting to close the gap between $13/11$ and $2$ for the known-positions setting. For randomized mechanisms, the most intriguing open question is whether there exists such a mechanism with approximation ratio significantly smaller than $2$ in the general setting. An obvious candidate is the RD mechanism that we presented in the context of the known-positions setting. Unfortunately, the particular variant of RD is no longer strategyproof when both the positions and the preferences of the agents are private, as shown by the following lemma.

\begin{figure}[t]
\centering
\includegraphics[scale=0.74]{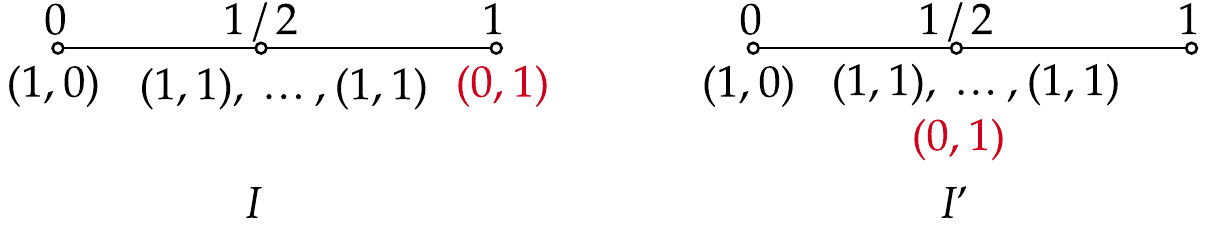}
\caption{The two instances used in the proof of Lemma~\ref{lem:RD-not-sp} to show that RD is not strategyproof in the general setting. In instance $I$, RD chooses facility $1$ whenever an agent with preferences $(1,1)$ is chosen as the dictator. This gives the agent with preferences $(1,0)$ marked in red incentive to misreport her position, thus leading to instance $I'$, where the tie is broken in favor of facility 2.}
\label{fig:RD}
\end{figure}

\begin{lemma} \label{lem:RD-not-sp}
RD is not strategyproof in the general setting. 
\end{lemma}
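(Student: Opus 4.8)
The plan is to exhibit a single instance together with a profitable unilateral misreport of an agent's \emph{position}. The manipulation exploits a feature that is inert in the known-positions setting: an agent's reported position enters the social-welfare computation, and can therefore change which facility RD regards as ``the optimal one'' in Step~3 — that is, which facility a dictator who approves both facilities ends up locating.

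Concretely, I would use three agents (see Figure~\ref{fig:RD}): a ``red'' agent with preferences $(1,0)$ whose true position is $p=1/4$, an agent with preferences $(1,1)$ at $q=1/2$, and an agent with preferences $(0,1)$ at some $r$ chosen so that in the truthful instance facility~$2$ is the optimal facility. In that instance, whenever the $(1,1)$-agent is chosen as the dictator RD locates facility~$2$ at $1/2$, so the red agent gets nothing in that event; her only positive contribution is the $1/n$-probability event in which she herself is the dictator, where facility~$1$ is placed at her true position $1/4$. Now suppose the red agent instead reports her position as $q=1/2$. Since she approves only facility~$1$, moving her reported position onto $q$ strictly increases the maximum welfare facility~$1$ can attain while leaving that of facility~$2$ untouched, and $r$ can be chosen so that facility~$1$ becomes the unique optimal facility in the misreported instance. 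Then, whenever the $(1,1)$-agent is the dictator, RD now locates facility~$1$ at $1/2$ — a facility she approves, at distance only $|p-q|=1/4$ from her true position — so her utility in that event rises from $0$ to $3/4$; against this she loses only $|p-q|=1/4$ in the event she herself is the dictator (facility~$1$ is then placed at $1/2$ rather than at $1/4$). A short per-dictator computation gives expected utility $1/3$ under truthful reporting and $1/2$ under the misreport, contradicting strategyproofness.

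The only delicate point is choosing the parameters so that two requirements hold simultaneously: facility~$2$ must be the facility RD locates for a $(1,1)$-dictator \emph{before} the misreport but facility~$1$ \emph{after} it; and the $(1,1)$-agent must lie within distance strictly less than $1/2$ of the red agent's true position, so that the gain $1-|p-q|$ in the ``$(1,1)$-dictator'' event strictly beats the loss $|p-q|$ in the ``red-is-dictator'' event. The first requirement is what forces the $(0,1)$-agent into the construction (to keep facility~$2$ competitive before the misreport) and pins down the admissible values of $r$: one can take $r=5/8$, which makes facility~$2$ \emph{strictly} optimal in the truthful instance and so removes any dependence on how RD breaks ties, or $r=3/4$ to match the symmetric picture in Figure~\ref{fig:RD}, where the two facilities are tied and the tie is resolved against the red agent. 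Everything else is the routine accounting of the red agent's utility under each possible dictator in the two instances, plus a check of which facility is RD-optimal in each; reconciling the two requirements above is the step I expect to be the main obstacle.
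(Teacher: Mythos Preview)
Your proposal is correct and follows essentially the same idea as the paper: have an agent who approves only one facility misreport her position so as to flip which facility RD deems ``optimal'' whenever a $(1,1)$-agent is the dictator, thereby turning a zero-utility event into a positive one. The paper uses $n\ge 4$ agents (one $(1,0)$ at $0$, $n-2$ agents $(1,1)$ at $1/2$, one $(0,1)$ at $1$) and lets the $(0,1)$-agent deviate to $1/2$; your three-agent instance with $r=5/8$ is a slightly leaner variant and has the advantage that facility~$2$ is \emph{strictly} optimal before the deviation, so no tie-breaking convention is needed. One caveat: your reference to Figure~\ref{fig:RD} will not match---the paper's figure depicts its own $n\ge4$ construction, not yours.
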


\begin{proof}
Let $n \geq 4$, and consider the following instance $I$ depicted in the left part of Figure~\ref{fig:RD}. There is an agent with preferences $(1,0)$ positioned at $0$, $n-2$ agents with preferences $(1,1)$ positioned at $1/2$, and one agent with preferences $(0,1)$ positioned at $1$. Since the maximum possible social welfare by placing facility $1$ is the same as the maximum possible social welfare by placing facility $2$, we can without loss of generality assume that the optimal facility is $1$, and thus when an agent with preferences $(1,1)$ is randomly chosen, facility $1$ is placed at her position; to avoid ties altogether, we could move the $n-2$ agents that are now positioned at $1/2$ to $1/2-\varepsilon$, for some arbitrarily small $\varepsilon > 0$. Observe that the expected utility of the agent $i$ that has preferences $(0,1)$ positioned at $1$ is $\frac{1}{n}$, as she only gets some utility when she is chosen.

Now consider the instance $I'$ that is obtained from $I$ by moving agent $i$ from $1$ to $1/2$; this instance is depicted in the right part of Figure~\ref{fig:RD}. Since the optimal facility in $I'$ is $2$, when an agent with preferences $(1,1)$ is randomly chosen as the dictator, facility $2$ is placed at her position. Therefore, agent $i$ has incentive to deviate to $1/2$ and change instance $I$ to $I'$ in order for her expected utility to become $(1-\frac{1}{n})\cdot \frac12 = \frac{n-1}{2n} > \frac{1}{n}$, thus proving that RD is not strategyproof when the agents can misreport their positions. 
\end{proof}

Intuitively, the reason that makes RD manipulable in the general setting is the tie-breaking rule that we use for the agents who approve both facilities; recall that such ties are broken in favor of the optimal facility. Breaking ties in this way is crucial for our characterization of the worst-case instances in Section~\ref{sec:positions}, but can be exploited by agents who are allowed to misreport their positions. Designing a variant of RD that is strategyproof in the general setting is straightforward, for example, by breaking ties between the two facilities equiprobably. Importantly however, the aforementioned characterization no longer holds in that case, which makes the analysis much more challenging. As a matter of fact, we can show that for any variant that uses a fixed probabilistic tie-breaking rule, the approximation ratio is \emph{strictly larger} than $3/2$! In the following theorem, we show this for the version of RD that breaks ties by locating facility $1$ with probability $p$ and facility $2$ with probability $1-p$; we refer to this mechanism as $p$-RD.

\begin{lemma}
\label{lem:pRD}
$p$-RD has approximation ratio at least $1.518$, for every fixed $p \in [0,1]$.
\end{lemma}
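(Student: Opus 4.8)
The plan is to exhibit, for every $p$, an instance (drawn from a parameterized family) on which $p$-RD performs badly, exploiting the fact — already hinted at in the discussion preceding the lemma — that $p$-RD commits to a \emph{fixed} probabilistic tie-breaking rule for agents who approve both facilities, which is necessarily ``wrong'' for some instances. Since relabelling the two facilities turns $p$-RD into $(1-p)$-RD while preserving optimal and mechanism welfare, the approximation ratio is symmetric about $p=1/2$, so it suffices to establish the bound for $p\le 1/2$; fix such a $p$.

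First I would set up a family of instances in which facility $1$ is the unique optimal facility and there is a block of $\gamma$ agents with preference $(1,1)$ located at a position $z$ (with $z$ and the block sizes allowed to depend on $p$), chosen so that: (i) most of the single-approval supporters of facility $1$ are clustered at a different position, so that facility $1$ is optimal \emph{there} and placing it at $z$ is strictly suboptimal; and (ii) no supporter of facility $2$ lies near $z$, so that placing facility $2$ at $z$ is strictly suboptimal as well. Then whenever a $(1,1)$-agent is chosen as the dictator, $p$-RD is forced to place a facility at $z$ and loses with probability $1$: with probability $1-p\ge 1/2$ it locates the wrong facility, and with probability $p$ it locates facility $1$ but away from its optimal position.

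Next I would compute in closed form the optimal welfare $W^{*}$ — this requires checking \emph{all} candidate facility/location pairs (not just the obvious one), together with the inequalities guaranteeing that facility $1$ is globally optimal, which is the delicate bookkeeping part — and the expected welfare $W(p\text{-RD})$, which is an affine function of $p$. Forming the ratio and using $p\le 1/2$ to replace $p$ by the value in $[0,1/2]$ least favourable to the mechanism, the task reduces to maximizing an explicit rational function of the remaining parameters. I would carry this out by zeroing the partial derivatives and solving the resulting polynomial system; the optimizer is a root of an explicit polynomial and the optimal value is at least $1.518$. The mirror construction (equivalently, the symmetry above) then gives the same bound for $p\ge 1/2$.

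The step I expect to be the main obstacle is identifying the right worst-case family and carrying out the ensuing optimization: naive placements of the $(1,1)$-block — at an endpoint, or co-located with one facility's supporters — only yield bounds around $4/3$, so genuine care is needed in positioning the $(1,1)$-agents relative to \emph{both} facilities' optimal locations (and in choosing $z$ and the block sizes as functions of $p$), and the final optimization has no clean closed form; it is precisely this computation that produces the constant $1.518$.
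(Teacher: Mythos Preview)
Your plan is far more elaborate than what the paper does, and one of its premises is simply false. The paper's proof is a one-paragraph computation on a \emph{single fixed} instance, independent of $p$: place $15$ agents with preference $(1,1)$ and $15$ with preference $(1,0)$ at position $0$, and $10$ agents with preference $(1,0)$ together with $10$ with preference $(0,1)$ at position $1$. The optimum locates facility~$1$ at $0$ for welfare $30$, while a direct count gives expected $p$-RD welfare $\frac{(3+p)\cdot 15^{2}+2\cdot 10^{2}}{50}$; for $p\le 1/2$ the ratio is at least $\tfrac{1500}{987.5}=\tfrac{120}{79}>1.518$, and symmetry handles $p\ge 1/2$.

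In particular, your assertion that ``naive placements of the $(1,1)$-block --- at an endpoint, or co-located with one facility's supporters --- only yield bounds around $4/3$'' is wrong: the paper's instance does \emph{both} (the $(1,1)$-block sits at the endpoint $0$, co-located with a block of $(1,0)$-agents) and already delivers the bound. Your design constraint (i), that the $(1,1)$-agents should sit away from facility~$1$'s optimum so that even the probability-$p$ branch loses, is therefore the wrong instinct. Since $p\le 1/2$, it pays instead to make the probability-$(1{-}p)$ branch as bad as possible; co-locating the $(1,1)$-agents with many $(1,0)$-agents does exactly that, because those $(1,0)$-agents contribute nothing when facility~$2$ is dropped at their position. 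The ``hard optimization'' you anticipate --- parameters depending on $p$, a polynomial system with no clean closed form --- never arises; the constant $1.518$ is just $120/79$ from an explicit instance.

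Finally, as written your proposal is only a plan: it never exhibits an instance or computes a ratio, so it does not yet establish the lemma even along your intended route.
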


\begin{proof}
Without loss of generality assume that $p \in [0,1/2]$, and consider the following instance. There are $30$ agents positioned at 0: $15$ of them have preferences $(1,1)$ and the remaining $15$ have preferences $(1,0)$. Furthermore, there are $20$ agents positioned at 1: $10$ of them have preferences $(1,0)$ and the remaining $10$ have preferences $(0,1)$. Observe that the optimal solution locates facility 1 at 0 and achieves social welfare $30$. The expected social welfare of $p$-RD is $\frac{(3+p)\cdot 15^2 + 2\cdot 10^2}{50}$. Hence, the approximation ratio of $p$-RD is $\frac{1500}{(3+p)\cdot 15^2 + 200}$, which is larger than $\frac{120}{79} > 1.518$ for every $p \in [0, 1/2]$.
\end{proof}

Finding the exact approximation ratio of $p$-RD is an intriguing open question. Perhaps more interestingly, one can define yet another strategyproof variant of RD, whose approximation ratio is not ruled out by Lemma~\ref{lem:pRD}. For example, we can count how many agents approve each facility and break ties {\em proportionally} to those numbers. It is quite easy to observe that the RD mechanism using the proportional tie-breaking rule is strategyproof for the general setting, and is a promising candidate for achieving a better approximation ratio. To this end, we state the following conjecture.
\begin{conjecture}
\label{ref:conj-RD}
For the general setting, the RD mechanism with the proportional tie-breaking rule has approximation ratio $3/2$.
\end{conjecture}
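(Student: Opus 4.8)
The lower bound of $3/2$ is immediate: in the instance witnessing the bound in the proof of Theorem~\ref{thm:RD} (with $\alpha_0 = 3\alpha_1$ and $\alpha_1 = \beta_0 = \beta_1$) no agent approves both facilities, so the proportional tie-breaking rule is never invoked and the mechanism behaves exactly as the known-positions RD analysed there, which achieves ratio exactly $3/2$ on that instance. For strategyproofness, observe that whenever an agent is selected as the dictator, some facility she reportedly approves is placed at her reported position; hence reporting her true position is weakly optimal, and reporting a preference that approves a facility she does not truly want is weakly suboptimal (when she becomes the dictator it risks placing an unapproved facility, and when someone else in $B$ is the dictator it only shifts the proportional tie-break toward that unapproved facility via a larger count). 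Combining these observations shows that truthful reporting is a dominant strategy.

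\textbf{Upper bound: the plan.} The natural approach is to replay the worst-case characterization of Section~\ref{sec:positions}. Write $f_j(y) = \sum_{k : t_{kj}=1}\bigl(1-|x_k-y|\bigr)$ for the welfare obtained by placing facility $j$ at $y$, assume without loss of generality that facility $1$ is optimal, so $W^*(I) = f_1(x^*)$ for $x^*$ a median of the agents approving facility $1$, and partition the agents into the set $A_1$ approving only facility $1$, the set $A_2$ approving only facility $2$, and the set $B$ approving both. With $n_1 = |A_1|+|B|$ and $n_2 = |A_2|+|B|$, the expected welfare of the mechanism is
\begin{align*}
W(\mathrm{RD}(I)) = \frac1n\Biggl( \sum_{i \in A_1} f_1(x_i) + \sum_{i \in A_2} f_2(x_i) + \sum_{i \in B}\Bigl( \tfrac{n_1}{n_1+n_2}\,f_1(x_i) + \tfrac{n_2}{n_1+n_2}\,f_2(x_i) \Bigr) \Biggr).
\end{align*}
The plan is then to establish, in order, analogues of Lemmas~\ref{lem:RD-11}, \ref{lem:RD-01} and \ref{lem:RD-10}: first, that a worst-case instance can be taken to have $B = \varnothing$; second, that every agent approving the non-optimal facility sits at $0$ or $1$; and third, that every agent approving facility $1$ sits at $0$, $x^*$, or $1$. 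The second and third reductions should go through essentially verbatim, since the proofs of Lemmas~\ref{lem:RD-01} and \ref{lem:RD-10} already presuppose $B = \varnothing$ and never touch the tie-breaking rule, relying only on the ``pair the agents around the median'' argument. The proof would then conclude exactly as that of Theorem~\ref{thm:RD}: monotonicity in the median coordinate reduces to agents at $\{0,1\}$, and nullifying the partial derivatives of $\frac{\alpha_0(\alpha_0+\alpha_1+\beta_0+\beta_1)}{\alpha_0^2+\alpha_1^2+\beta_0^2+\beta_1^2}$ yields the value $3/2$ at $\alpha_0 = 3\alpha_1 = 3\beta_0 = 3\beta_1$.

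\textbf{The main obstacle.} The crux — and the reason the statement is stated as a conjecture — is the first reduction. In the proof of Lemma~\ref{lem:RD-11} a biapproving agent could be turned into a single-approver of the optimal facility for free, because ties were broken in favour of the optimum, so the mechanism's behaviour on that dictator only improved. Under proportional tie-breaking this fails: converting a biapproving agent $i$ into an $A_1$-agent (i) replaces her dictator term $\tfrac{n_1}{n_1+n_2}f_1(x_i) + \tfrac{n_2}{n_1+n_2}f_2(x_i)$ by $f_1(x_i)$, which raises $W(\mathrm{RD})$ only when $f_1(x_i) \ge f_2(x_i)$ — a condition that can fail if $i$ lies far from the agents approving facility $1$ — and (ii) decreases $n_2$, which re-weights the tie-break of \emph{every other} biapproving dictator toward facility $1$ while simultaneously shrinking $f_2$ everywhere, with a net effect whose sign again depends on the uncontrolled differences $f_1(x_j) - f_2(x_j)$. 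I would attack this not by eliminating $B$ but by bounding $W(\mathrm{RD})$ from below directly, retaining the term $\sum_{i\in B}\tfrac{n_2}{n_1+n_2}f_2(x_i)$ and splitting into the regime where the biapproving agents are spatially aligned with facility $1$ (so their facility-$1$ contribution is close to $f_1(x^*)$ and they behave like helpful $A_1$-agents) and the regime where they are aligned with facility $2$ (so $\max_y f_2(y)$ is itself large, which contradicts the optimality of facility $1$ unless $|B|$ is small). Making this dichotomy quantitative — in particular, simultaneously controlling the proportional weights, the two median positions, and the constraint that facility $1$ is optimal — is the hard part; a clean argument should ultimately show that the extremal instance has $B = \varnothing$ after all, recovering the configuration of Theorem~\ref{thm:RD} and hence the value $3/2$.
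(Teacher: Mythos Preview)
This statement is a \emph{conjecture} in the paper, not a theorem: the paper gives no proof, only the remark that ``it is quite easy to observe that the RD mechanism using the proportional tie-breaking rule is strategyproof for the general setting'' and then states the $3/2$ bound as Conjecture~\ref{ref:conj-RD}. So there is nothing in the paper to compare your upper-bound argument against --- the paper leaves exactly the gap you identify.

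Your lower-bound observation is correct and matches what the paper implicitly has in mind: the tight instance from Theorem~\ref{thm:RD} contains no agents approving both facilities, so the tie-breaking rule is irrelevant there. Your strategyproofness sketch is along the right lines and matches the paper's one-line claim, though to be complete you should also cover the case of an agent who truly approves both facilities dropping one from her report (this is harmless: as dictator she still gets utility $1$, and for other biapproving dictators the shift in $n_1/(n_1+n_2)$ does not change her expected utility since she is indifferent between the two facilities at the same location).

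On the upper bound, you have correctly diagnosed why the paper does not simply repeat the argument of Section~\ref{sec:positions}: the reduction of Lemma~\ref{lem:RD-11} genuinely relies on the optimum-favouring tie-break, and under the proportional rule converting a biapprover to a single approver can \emph{increase} the mechanism's welfare through both effects you list. Your proposed dichotomy (biapprovers spatially aligned with facility~$1$ versus aligned with facility~$2$) is a plausible line of attack, but as you yourself note, it is not a proof --- the quantitative control over the interaction between the proportional weights, the two medians, and the optimality constraint is exactly the missing piece, and the paper offers no hint on how to supply it. In short: your write-up is an accurate account of what is known and why the conjecture is open, not a proof of it, and that is consistent with the paper's own treatment.
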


Besides strengthening our results, there are several meaningful extensions of our model that could be the subject of future work. For the $k$ out of $m$ facilities setting, while we have made an important first step, there is still significant work to be done, particularly in the known-positions setting. One could also consider several different variants of our basis model. For example, the agents may have fractional preferences rather than approval preferences (that is, each agent may assign weights in $[0,1]$ to the facilities, instead of weights in $\{0,1\}$). Other possible variants include settings in which some facilities are \emph{obnoxious} \citep{YMZ,FJ15}, meaning that agents would like to be far from them if they are built, and discrete settings in which the facilities can only be built at predefined locations on the line (e.g., see \citep{DFMN12,FFG16,SV15,SV16}). 
Finally, an interesting generalization of our problem is when every facility comes at a different cost, and the objective is to maximize the social welfare by choosing and locating $k$ facilities under the constraint that their accumulated costs is below a predefined budget. This latter setting is directly motivated by \emph{participatory budgeting}, which has recently drawn the attention of the computational social choice community~\citep{benade2020preference,aziz2020participatory}.

\bibliographystyle{plainnat}
\bibliography{references}

\end{document}